\newtheorem{theorem}{Theorem}[section]
\newtheorem{lemma}[theorem]{Lemma}
\newtheorem{defn}[theorem]{Definition}
\newcommand{\R}{\mathbb{R}}
\def\ket#1{\mathinner{|{#1}\rangle}}
\newcommand{\braket}[2]{\langle #1|#2\rangle}
\renewcommand{\part}[2]{\frac{\partial #1}{\partial #2}}
\renewcommand {\ss}[1] { \subsection{#1} } 
\newcommand{\als}[1]{\begin{align*}#1\end{align*}}
\newcommand{\all}[2]{\begin{align}\label{#2} #1\end{align}}
\newcommand{\al}[1]{\begin{align} #1\end{align}}
\newcommand{\en}[1]{\left ( #1 \right )}
\newcommand{\rankk} {rank-$k$ } 
\newcommand{\topk} {top-$k$ }
\newcommand{\nl}{\notag \\}
\newcommand{\norm}[1]{\lVert#1\rVert}
\newcommand{\thmref}[1]{\hyperref[#1]{{Theorem~\ref*{#1}}}}
\newcommand{\lemref}[1]{\hyperref[#1]{{Lemma~\ref*{#1}}}}
\newcommand{\remref}[1]{\hyperref[#1]{{Remark~\ref*{#1}}}}
\newcommand{\corref}[1]{\hyperref[#1]{{Corollary~\ref*{#1}}}}
\newcommand{\eqnref}[1]{\hyperref[#1]{{Equation~(\ref*{#1})}}}
\newcommand{\claimref}[1]{\hyperref[#1]{{Claim~\ref*{#1}}}}
\newcommand{\remarkref}[1]{\hyperref[#1]{{Remark~\ref*{#1}}}}
\newcommand{\propref}[1]{\hyperref[#1]{{Proposition~\ref*{#1}}}}
\newcommand{\factref}[1]{\hyperref[#1]{{Fact~\ref*{#1}}}}
\newcommand{\defref}[1]{\hyperref[#1]{{Definition~\ref*{#1}}}}
\newcommand{\exampleref}[1]{\hyperref[#1]{{Example~\ref*{#1}}}}
\newcommand{\hypref}[1]{\hyperref[#1]{{Hypothesis~\ref*{#1}}}}
\newcommand{\secref}[1]{\hyperref[#1]{{Section~\ref*{#1}}}}
\newcommand{\chapref}[1]{\hyperref[#1]{{Chapter~\ref*{#1}}}}
\newcommand{\apref}[1]{\hyperref[#1]{{Appendix~\ref*{#1}}}}
\begin{document}
\title{Quantum Recommendation Systems}  

\author{ 
Iordanis Kerenidis \thanks{
CNRS, IRIF, Universit\'e Paris Diderot, Paris, France and  
Centre for Quantum Technologies, National University of Singapore, Singapore. 
Email: {\tt jkeren@liafa.univ-paris-diderot.fr}.} 
\and
Anupam Prakash \thanks{Centre for Quantum Technologies and School of Physical and Mathematical Sciences, Nanyang Technological University, Singapore.
Email: { \tt aprakash@ntu.edu.sg}.}
}

\maketitle

\begin{abstract} 
A recommendation system uses the past purchases or ratings of $n$ products by a group of $m$ users, in order to provide personalized recommendations to individual users. The information is modeled as an $m \times n$ preference matrix which is assumed to have a good \rankk approximation, for a small constant $k$. 

In this work, we present a quantum algorithm for recommendation systems that has running time $O(\text{poly}(k)\text{polylog}(mn))$. All known classical algorithms for recommendation systems that work through reconstructing an approximation of the preference matrix run in time polynomial in the matrix dimension. Our algorithm provides good recommendations by sampling efficiently from an approximation of the preference matrix, without  reconstructing the entire matrix. For this, we design an efficient quantum procedure to project a given vector onto the row space of a given matrix. This is the first algorithm for recommendation systems that runs in time polylogarithmic in the dimensions of the matrix and provides an example of a quantum machine learning algorithm for a real world application. 

\end{abstract} 

\section{Introduction}

A recommendation system uses information about past purchases or ratings of products by a group of users in order to provide personalized recommendations to individual users. More precisely, we assume there are $m$ users, for example clients of an online platform like Amazon or Netflix, each of whom have some inherent preference or utility about $n$ products, for example books, movies etc. The user preferences are modeled by an $m \times n$ matrix $P$, where the element $P_{ij}$ denotes how much the user $i$ values product $j$. If the preference matrix $P$ had been known in advance, it would have been easy to make good recommendations to the users by selecting elements of this matrix with high value. However, this matrix is not known a priori. Information about $P$ arrives in an online manner each time a user buys a product, writes a review, or fills out a survey. A recommendation system tries to utilize the already known information about all users in order to suggest products to individual users that have high utility for them and can eventually  lead to a purchase. 

There has been an extensive body of work on recommendation systems, since it is a very interesting theoretical problem and also of great importance to the industry. We cite the works of \cite{AFKMS01, PTRV98, DKR02, APPT05} who studied the problem in a combinatorial or linear algebraic fashion. There has also been a series of works in the machine learning community many of them inspired by a practical challenge by Netflix on real world data \cite{KBV09, BK07, BK11}.

We next discuss the low rank assumption on the preference matrix underlying recommendation systems and the 
way this assumption is used to perform matrix reconstruction in classical recommendation systems. We then describe the 
computational model for our quantum recommendation algorithm that is based on matrix sampling and compare 
it to classical recommendation algorithms based on matrix reconstruction. We provide a high level overview of our algorithm 
in section \ref{oneone} and then, we compare it with previous work on quantum machine learning in section \ref{onetwo}. 

\paragraph*{The low-rank assumption.}
The underlying assumption in recommendation systems is that one can infer information about a specific user from the information about all other users because, in some sense, the majority of users belong to some well-defined ``types". In other words, most people's likes are not unique but fall into one of a small number of categories. Hence, we can aggregate the information of ``similar" users to predict which products have high utility for an individual user. 

More formally, the assumption in recommendation systems is that the preference matrix $P$ can be well approximated (according to some distance measure) by a low-rank matrix.  There are different reasons why this assumption is indeed justified. First, from a philosophical and cognitive science perspective, it is believed that there are few inherent reasons why people buy or not a product: the price, the quality, the popularity, the brand recognition, etc. (see for example \cite{AFKMS01, R79}). Each user can be thought of as weighing these small number of properties differently but still, his preference for a product can be computed by checking how the product scores in these properties. Such a model produces a matrix $P$ which has a good \rankk approximation, for a small $k$, which can be thought of as a constant independent of the number of users $m$ or the number of products $n$. 
Moreover, a number of theoretical models of users have been proposed in the literature which give rise to a matrix with good low-rank approximation. For example, if one assumes that the users belong to a small number of ``types", where a type can be thought of as an archetypical user, and then each individual user belonging to this type is some noisy version of this archetypical user, then the matrix has a good low-rank approximation \cite{DKR02, PTRV98}. In addition, preference matrices that come from real data have been found to have rank asymptotically much smaller than the size of the matrix.

For these reasons, the assumption that the matrix $P$ has a good low-rank approximation has been widely used in the literature. In fact, if we examine this assumption more carefully, we find that in order to justify that the recommendation system provides high-value recommendations, we assume that users ``belong" to a small number of user types and also that they agree with these types on the high-value elements. For contradiction, imagine that there are $k$ types of users, where each type has very few high-value elements and many small value elements. Then, the users who belong to each type can agree on all the small value elements and have completely different high-value elements. In other words, even though the matrix is low-rank, the recommendations would be of no quality. Hence, the assumption that has been implicitly made, either by philosophical reasons or by modeling the users, is that there are $k$ types of users and the users of each type ``agree" on the high-value elements.

\vspace{-0.3cm}
\paragraph*{Recommendations by Matrix Reconstruction.}
One of the most powerful and common ways to provide competitive recommendation systems is through a procedure called matrix reconstruction. In this framework, we assume that there exists a hidden matrix $A$, in our case the preference matrix, 
which can be well approximated by a low-rank matrix. The reconstruction algorithm gets as input a number of samples from $A$, in our case the previous data about the users' preferences, and outputs a \rankk matrix with the guarantee that it is ``close" to $A$ according to some measure (for example, the $2$- or the Frobenius norm). For example, the reconstruction algorithm can perform a Singular Value Decomposition on the subsample matrix $\widehat{A}$, where $\widehat{A}$ agrees with $A$ on known samples and is $0$ on the remaining entries, and output the projection of $\widehat{A}$ onto the space spanned by its top-$k$ singular vectors. The ``closeness" property guarantees that the recommendation system will select an element that with high probability corresponds to a high-value element in the matrix $A$ and hence it is a good recommendation (\cite{DKR02, AFKMS01}). Another commonly used algorithm for matrix reconstruction is a variant of 
alternating minimization, this has been successful in practice \cite{KBV09} and has been recently analyzed theoretically \cite{JNS13}. Note that all known algorithms for matrix reconstruction require time polynomial in the matrix dimensions. 

An important remark is that matrix reconstruction is a harder task than recommendation systems, in the sense that a good recommendation system only needs to output a high value element of the matrix and not the entire matrix \cite{RU12, GT05}. Nevertheless, classical algorithms perform a reconstruction of the entire matrix as the resources required for finding high value elements are the same as the resources
needed for full reconstruction. 
\vspace{-0.3cm}
\paragraph*{Computational resources and performance.}
In order to make a precise comparison between classical recommendation systems and our proposed system, we discuss more explicitly the computational resources in recommendation systems. 
We are interested in systems that arise in the real world, for example on Amazon or Netflix, where the number of users can be about 100 million and the products around one million. 
For such large systems, storing the entire preference matrix or doing heavy computations every time a user comes into the system is prohibitive. 

The memory model for an online recommendation system is the following. A data structure is maintained 
that contains the information that arrives into the system in the form of elements $P_{ij}$ of the preference matrix. We require that the time needed to write the tuple $(i,j,P_{ij})$ into the memory data structure and to read it out is polylogarithmic in the matrix dimensions. In addition, we require that the total memory required is linear (up to polylogarithmic terms) in the number of entries of the preference matrix that have arrived into the system. For example, one could store the elements $(i,j,P_{ij})$ in an ordered list. 

Most classical recommendation systems use a two stage approach. The first stage involves preprocessing the data stored in memory. For example, a matrix reconstruction algorithm can be performed during the preprocessing stage to produce and store a low-rank approximation of the preference matrix. This computation takes time polynomial in the matrix dimensions, $\mbox{poly}(mn)$, and the stored output is the \topk row singular vectors that need space $O(nk)$. 
The second stage is an online computation that is performed when a user comes into the system. For example, one can project the row of the subsample matrix that corresponds to this user onto the already stored \topk row singular vectors of the matrix and output a high value element in time $O(nk)$. 

The goal is to minimize the time needed to provide an online recommendation while at the same time keeping the time and the extra memory needed for the preprocessing reasonable. In general, the preprocessing time is polynomial in the dimensions of the preference matrix, i.e. $\mbox{poly}(mn)$, the extra memory is $O(nk)$, while the time for the online recommendation is $O(nk)$. Note that in real world applications, it is prohibitive to have a system where the preprocessing uses memory $O(mn)$, even though with such large memory the online recommendation problem becomes trivial as  all the answers can be pre-computed and stored.

A recommendation system performs well, when with high probability and for most users it outputs a recommendation that is good for a user. 
The performance of our recommendation system is similar to previous classical recommendation systems based on matrix reconstruction and depends on how good the low-rank approximation of the matrix is. Our algorithm works for any matrix, but as in the classical case, it guarantees good recommendations only when the matrix has a good low-rank approximation.  


\subsection{Our results} \label{oneone} 

In this section, we provide a high level overview of our quantum recommendation algorithm 
which requires time polylogarithmic in the matrix dimensions and polynomial only in the rank of the matrix, which as we have argued is assumed to be much smaller than the dimension of the matrix. This is the first algorithm for recommendation systems with complexity polylogarithmic in the matrix dimensions. 
\vspace{-0.3cm}
\paragraph*{Our model.}

First, we describe a simple and general model for online recommendation systems.  
We start with a hidden preference matrix $T$, where the element $T_{ij}$ takes values 0 or 1 and indicates whether 
product $j$ is "good" for user $i$. Such boolean matrices arise naturally in a "thumbs up / thumbs down" system, where users can declare whether they like or not a certain product. We can also easily construct such matrices from non-boolean preference matrices. For each user, we split the products into two categories, the ``good" and the ``bad" recommendations, based on the matrix entries. This categorization can be done in different ways and we do not have to impose any constraints. For example, good recommendations can be every product with value higher than a threshold or the 100 products with the highest values etc.

Our assumption is that the matrix $T$ has a good low-rank approximation. The reasons that justify this assumption are the ones used already in the literature. As before, we believe that there is a small number of user types, and within each type the users ``agree" on the high-value elements. This modelling of the users gives rise to a matrix $T$ with a good low-rank approximation.
Once we have defined the matrix $T$, then any algorithm that reconstructs a matrix close to $T$, will provide a good recommendation, since $T$ is the indicator matrix of good recommendations. 
\vspace{-0.3cm}
\paragraph*{Recommendations by Matrix Sampling.}
The low-rank approximation of the matrix $T$ can be computed as follows: first, define the matrix $\widehat{T}$, where with some probability each element of $\widehat{T}$ is equal to the corresponding element in $T$ normalized and otherwise it is zero. This matrix, that we call a subsample matrix, corresponds to the information the recommendation system has already gathered about the matrix $T$. Then, by performing a Singular Value Decomposition and computing the projection of this matrix to its \topk row singular vectors, we compute a matrix $\widehat{T}_k$ which can be proven to be close to the matrix $T$, as long as $T$ had a good \rankk approximation.

As we remarked, in principle, we do not need to explicitly compute the entire matrix $\widehat{T}_k$. 
It is sufficient to be able to sample from the matrix $\widehat{T}_k$ which is close to $T$. Since $T$ is a 0-1 matrix, sampling from $\widehat{T}_k$ means finding with high probability a 1-element in $T$. By the fact that $T$ indicates the good recommendations, our algorithm will output a good recommendation with high probability. 
Hence, we reduce the question of providing good recommendations to being able to sample from the matrix $\widehat{T}_k$. In fact, since we want to be able to recommend products to any specific user $i$, we need to be able, given an index $i$, to sample from the $i$-th row of the matrix $\widehat{T}_k$, denoted by $ ( \widehat{T}_k )_i$,
i.e. output an element $( \widehat{T}_k )_{ij}$ with probability $| ( \widehat{T}_k )_{ij}|^2 / \norm{( \widehat{T}_k )_i}^2$. Note that the row $( \widehat{T}_k)_i$ is the projection of the row $\widehat{T}_i$ onto the \topk row singular vectors of $\widehat{T}$.

\vspace{-0.3cm}
\paragraph*{An efficient quantum algorithm for Matrix Sampling.}
Here is where quantum computing becomes useful: we design a quantum procedure that samples from the row $ ( \widehat{T}_k )_i$ in time $\mbox{polylog}(m n)$. Note that the quantum algorithm does not output the row $ ( \widehat{T}_k )_i$, which by itself would take time linear in the dimension $n$, but only samples from this row. But this is exactly what is needed for recommendation systems: Sample a high-value element of the row, rather than explicitly output the entire row. More precisely, we describe an efficient quantum procedure that takes as input a vector, a matrix, and a threshold parameter
and generates the quantum state corresponding to the projection of the vector onto the space spanned by the row singular vectors of the matrix whose corresponding singular value is greater than the threshold.
From the outcome of this procedure it is clear how to sample a product by just measuring the quantum state in the computational basis. 

 \subsection{Comparisons with related work.} \label{onetwo} 
The development of quantum algorithms for linear algebra was initiated by the breakthrough algorithm of Harrow, Hassidim, Lloyd \cite{HHL09}. The HHL algorithm takes as input a sparse (the number of non zero entries in each row of the matrix is polylogarithmic) and well-conditioned system of linear equations and in time polylogarithmic in the dimension of the system outputs a quantum state which corresponds to the classical solution of the system. Note that this algorithm does not explicitly output the classical solution, nevertheless, the quantum state enables one to sample from the solution vector. This is a very powerful algorithm and 
has been very influential in recent times, where several works \cite{LMR13, LMR13a, LMR13b} obtained quantum algorithms for machine learning problems based on similar assumptions. However, when looking at these applications, one needs to be extremely careful about two things: first, the assumptions that one needs to make on the input in order to achieve efficient running time, since, for example, the running time of the HHL algorithm is polylogarithmic only when the matrix is well conditioned (i.e. the minimum singular value is at least inverse polynomially big) and sparse; and second, whether the quantum algorithm solves the original classical problem or a weaker variant to account for the fact that the classical solution is not given explicitly but is encoded in a quantum state \cite{A15,LMR13b}. In addition, we mention a recent but orthogonal proposal to use techniques inspired by the structure of quantum theory for classical recommender systems \cite{S16}. 

Let us be more explicit about our algorithm's assumptions. We assume the data is stored in a classical data structure which enables the quantum algorithm to efficiently create superpositions of rows of the subsample matrix. The HHL algorithm also needs to be able to efficiently construct quantum states from classical vectors given as inputs. In the Appendix, 
we describe a classical data structure for storing the matrix $\widehat{T}$. The data structure maintains some extra information about the matrix entries, so that, the total memory needed is linear (up to polylogarithmic terms) in the number of entries in the subsample matrix, the data entry time remains polylogarithmic in the matrix dimensions, and an algorithm with quantum access to the data structure can create the necessary superpositions in polylogarithmic time. Note also, that even in the case the data has been stored as a normal array or list, we can preprocess it in linear time to construct our needed data structure. Thus, our quantum algorithm works under the same memory model as any other quantum query algorithm (e.g. Grover's algorithm): it assumes that there exists a classical data structure to which we can make quantum queries.
Overall, our system retains the necessary properties for the data entry and retrieval stage. Moreover, the classical complexity of matrix reconstruction does not change given the new data structure. 

Importantly, in our system, we do not perform any preprocessing nor do we need any extra memory. Our recommendation algorithm just performs an online computation that requires time $\mbox{poly}(k)\mbox{polylog}(mn)$. This can be viewed as exponentially smaller than the classical time if the rank $k$ is a small 
constant and the matrix dimensions are of the same order. 
Unlike the HHL algorithm, our running time does not depend on the sparsity of the input matrix nor on its condition number, i.e. its smallest singular value. In other words, we do not make any additional assumptions about the classical data
beyond the low rank approximation assumptions made by classical recommendation systems.

It is also crucial to note that we have not changed what one needs to output, as was the case for the HHL algorithm and its applications, where instead of explicitly outputting a classical solution, they construct a quantum state that corresponds to this solution. We have instead described a real world application, where the ability to sample from the solution is precisely what is needed. 

The rest of this paper is organized as follows. We introduce some preliminaries in section 2. In sections 3 and 4
we show that sampling from an approximate reconstruction of the matrix $T$ suffices to provide good recommendations and that if the sub-samples $\widehat{T}$ are uniformly distributed then projecting onto 
the top $k$ singular vectors of $\widehat{T}$ is an approximate reconstruction for $T$. In section 5 we describe 
an efficient quantum algorithm for projecting a vector onto the space of singular vectors of $\widehat{T}$ whose corresponding singular values are greater than a threshold. In section 6 we combine these components to obtain a quantum recommendation algorithm and analyze its performance and running time. 


\section{Preliminaries}

\subsection{Linear algebra} 
 The set $\{ 1, 2, \cdots, n\}$ is denoted by $[n]$, the standard basis vectors in $\R^{n}$ are denoted by $e_{i}, i \in [n]$.
For any matrix $A\in \R^{m\times n}$, the Frobenius norm is defined as $\norm{A}_{F}^{2}= \sum_{ij} A_{ij}^{2} = \sum_{i} \sigma_{i}^{2}$, where $\sigma_{i}$ are the singular values. We also say that we sample from the matrix $A$ when we pick an element $(i,j)$ with probability $|A_{ij}|^2/\norm{A}_F^2$, and write $(i,j) \sim A$. For a vector $x \in \R^n$ we denote the norm $\norm{x} ^2 = \sum_{i} x_i^2$.

The matrix $A$ is unitary if $AA^{*}=A^{*}A=I$, the eigenvalues of a unitary matrix 
have unit norm. A matrix $P \in \R^{n \times n}$ is a projector if $P^{2}=P$. If $A$ is a matrix with orthonormal columns, 
then $AA^{t}$ is the projector onto the column space of $A$. 

\textbf{Singular value decomposition:} 
The singular value decomposition of $A\in \R^{m\times n}$ is a decomposition of the form $A=U\Sigma V^{t}$ where 
$U \in \R^{m\times m}, V\in \R^{n\times n}$ are unitary and $\Sigma \in \R{m \times n}$ is a diagonal matrix with positive entries.  
The $SVD$ can be written as  $A = \sum_{i \in [r]} \sigma_{i} u_{i} v_{i}^{t}$ where $r$ is the rank of $A$. 
The column and the row singular vectors $u_{i}$ and $v_{i}$ are the columns of $U$ and $V$ respectively. The Moore Penrose pseudo-inverse is 
defined as $A^{+} = V \Sigma^{+} U^{t}$, where $A^{+} = \sum_{i\in [r]} \frac{1}{\sigma_{i}} v_{i} u_{i}^{t}$. 
It follows that $AA^{+}$ is the projection onto the column space $Col(A)$ while $A^{+}A$ is the projection onto the row space $Row(A)$. The truncation of $A$ to the space of the singular vectors that correspond to the $k$ largest singular values is denoted by $A_{k}$, that is $A_{k}=  \sum_{i \in [k]} \sigma_{i} u_{i} v_{i}^{t}$. We denote by $A_{ \geq \sigma}$ the projection of the matrix $A$ onto the space spanned by the singular vectors whose corresponding singular value is bigger than $\sigma$, that is $A_{\geq \sigma}=  \sum_{i : \sigma_i \geq \sigma} \sigma_{i} u_{i} v_{i}^{t}$.

\subsection{Quantum information}
We use the standard bra-ket notation to denote quantum states. 
We use the following encoding for representing $n$ dimensional vectors by quantum states, 
\begin{defn} \label{vstate} 
The vector state $\ket{x}$ for $x \in \R^{n}$ is defined as $\frac{1}{\norm{x} } \sum_{i\in [n]} x_{i} \ket{i}$. 
\end{defn} 
\noindent
In case $x \in \R^{mn}$, we can either see it as a vector in this space or as a matrix with dimensions $m \times n$ and then we can equivalently write
$\frac{1}{\norm{x} } \sum_{i\in [m], j \in [n]} x_{ij} \ket{i,j}$.

A quantum measurement $(POVM)$ is a collection of positive operators 
$M_{a} \succeq 0$ such that $\sum_{a} M_{a} =I_{n}$. The probability of obtaining outcome $a$ 
when state $\ket{\phi}$ is measured is $Tr(\braket{\phi} {M_{a}\phi})$. 
If $\ket{x}$ is measured in the standard basis, then outcome $i$ is observed with probability 
$x_{i}^{2}/\norm{x}^{2}$.

We also use a well-known quantum algorithm called phase estimation. The phase estimation algorithm estimates the eigenvalues of a 
unitary operator $U$ with additive error $\epsilon$ in time $O(  T(U) \log n /\epsilon)$ if $T(U)$ is the time required to implement the 
unitary $U$. 
\begin{theorem} \label{pest} 
{\em Phase estimation \cite{K95}}: Let $U$ be a unitary operator, with eigenvectors $\ket{v_j}$ and eigenvalues $e^{ \iota \theta_{j}}$ for $\theta_{j} \in [-\pi, \pi]$, i.e. we have $U\ket{v_{j}} = e^{ \iota \theta_{j}} \ket{v_{j}}$ for $j \in [n]$. For a precision parameter $\epsilon >0$, there exists a quantum algorithm that runs in time $O(  T(U) \log n /\epsilon)$ and with probability $1-1/\emph{poly}(n)$ 
maps a state $\ket{\phi} = \sum_{j \in [n]} \alpha_{j} \ket{v_{j}}$ 
to the state $\sum_{j \in [n]} \alpha_{j} \ket{v_{j}}\ket{ \overline{\theta_{j}} }$ such that $\overline{\theta_{j}} \in \theta_{j} \pm \epsilon$ for all $j\in [n]$. 
\end{theorem}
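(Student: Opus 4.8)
The plan is to implement the standard phase estimation circuit of Kitaev, analyze it on a single eigenvector, and then extend to an arbitrary superposition by linearity. First I would append a clock register of $t$ qubits prepared in the uniform superposition $\frac{1}{\sqrt{2^t}}\sum_{x=0}^{2^t-1}\ket{x}$ by Hadamard gates. Writing $x$ in binary and applying $U$ controlled on each clock qubit $\ell$ a total of $2^\ell$ times realizes the map $\ket{x}\ket{v_j}\mapsto e^{\iota \theta_j x}\ket{x}\ket{v_j}$, so the register becomes $\frac{1}{\sqrt{2^t}}\sum_{x} e^{\iota \theta_j x}\ket{x}\ket{v_j}$ and the phase $\theta_j$ is imprinted on the clock. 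Applying the inverse quantum Fourier transform to the clock register and reading it out yields an integer $b$ whose rescaled value $2\pi b/2^t$ (taken mod $2\pi$ and re-centered into $[-\pi,\pi]$) is the estimate $\overline{\theta_j}$.

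The core of the argument is the single-eigenvector error bound. After the inverse QFT, the amplitude on outcome $b$ is the geometric sum $\frac{1}{2^t}\sum_{x=0}^{2^t-1} e^{\iota x(\theta_j - 2\pi b/2^t)}$, whose squared magnitude is the standard Dirichlet/Fej\'er kernel, sharply peaked at the grid point nearest $\theta_j$. The usual tail estimate for this kernel shows that if I take $t = \lceil \log(1/\epsilon)\rceil + O(1)$ clock qubits, so that the grid spacing $2\pi/2^t$ is at most $\epsilon$, then the readout satisfies $\overline{\theta_j}\in\theta_j\pm\epsilon$ with probability at least a constant $c>1/2$ (a fixed number of extra clock qubits pushes this probability above $1/2$). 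One run uses $\sum_{\ell=0}^{t-1}2^\ell = O(2^t) = O(1/\epsilon)$ applications of $U$ plus an inverse QFT costing $O(t^2)=\plog{1/\epsilon}$, hence time $O(T(U)/\epsilon)$.

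To boost the success probability to $1-1/\poly{n}$ without paying more than a logarithmic factor in $n$, I would \emph{not} enlarge the clock register, since driving the per-run failure below $1/\poly{n}$ that way would multiply the number of $U$-calls by $\poly{n}$. Instead I would run $R=O(\log n)$ independent copies coherently and compute the median of their estimates into a fresh register; since each copy is good with probability $c>1/2$, a Chernoff bound over the $R$ copies forces the median to lie in $\theta_j\pm\epsilon$ except with probability $1/\poly{n}$. This contributes the factor $\log n$, giving total time $O(T(U)\log n/\epsilon)$, and by linearity the procedure sends $\sum_j\alpha_j\ket{v_j}$ to $\sum_j\alpha_j\ket{v_j}\ket{\overline{\theta_j}}$. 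The step I expect to be the main obstacle is making this boosting \emph{coherent}: the $R$ intermediate estimate registers are not a deterministic function of the median and so cannot be cleanly uncomputed, so I must argue that the amplitude lying outside the good event contributes only $1/\poly{n}$ weight and therefore that the state actually produced is within inverse-polynomial trace distance of the ideal $\sum_j\alpha_j\ket{v_j}\ket{\overline{\theta_j}}$. Controlling this leftover weight together with the kernel tail bound is where the quantitative care is needed.
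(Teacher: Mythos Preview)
Your proposal is correct and matches the paper's approach. The paper does not actually prove this theorem in detail---it is stated as a known result cited from \cite{K95}, accompanied only by a one-sentence sketch: the standard phase estimation yields an estimate within additive error $\epsilon$ with probability at least $0.8$, and this is boosted to $1-1/\text{poly}(n)$ by repeating $O(\log n)$ times and taking the most frequent estimate. Your outline (Kitaev circuit, Dirichlet-kernel tail bound, median-of-$O(\log n)$-copies amplification, linearity) is exactly this, fleshed out; the coherence/uncomputation concern you flag is real but is handled, as you say, by the $1/\text{poly}(n)$ trace-distance argument, and the paper does not elaborate on it either.
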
 
Note that we use $\iota$ to denote the imaginary unit $i$ to avoid confusion with summation indices. 
The analysis of phase estimation shows that the algorithm outputs a discrete valued estimate for each eigenvalue
that is within additive error $\epsilon$ with probability at least $0.8$, the probability is boosted to $1- 1/\text{poly}(n)$ by repeating $O(\log n)$ times and choosing the most frequent estimate.


\section{A model for recommendation systems}

\ss{The preference matrix}

We define a simple and general model for recommendation systems. 
We define a {\em preference matrix } $T$ of size $m \times n$, where every row corresponds to a user, every column to a product, and the element $T_{ij}$ is 0 or 1 and denotes whether product $j$ is a good recommendation for user $i$ or not. 

\begin{defn}
A product $j$ is a good recommendation for user $i$ iff $T_{ij} =1$, otherwise it is bad. We also write it as the pair $(i,j)$ is a good or bad recommendation.  
\end{defn}

Such matrices arise in systems where the information the users enter is binary, for example in a "thumbs up / thumbs down" system. We can also construct such matrices from more general preference matrices where the users use a star system to grade the products. One can imagine, for example, that the good recommendations could be the products for which the user has a preference higher than a threshold, or the hundred products with highest preference etc.

\subsection{Sampling an approximation of the preference matrix}

%
%


Note that sampling from the preference matrix $T$ would always yield a good recommendation, since the products that correspond to bad recommendations have probability 0. This remains true even when we want to sample from a specific row of the matrix in order to provide a recommendation to a specific user. 
Our goal now is to show that sampling from a matrix that is close to the matrix $T$ under the Frobenius norm yields good recommendations with high probability for most users.

\begin{lemma}\label{matrix}
Let $\widetilde{T}$ be an approximation of the matrix $T$ such that
$ \norm{ T - \widetilde{T} }_{F} \leq \epsilon \norm{ T }_F$. 
Then, the probability a sample according to $\widetilde{T}$ is a bad recommendation is
\[
\Pr_{(i,j) \sim \widetilde{T}} [(i,j) \mbox{ bad} ] \leq \left( \frac{\epsilon}{1-\epsilon} \right)^2
\]
\end{lemma}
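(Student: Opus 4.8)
The plan is to expand the bad-recommendation probability directly from the definition of sampling and then bound the numerator and the denominator separately. First I would use that $T$ is a $0$-$1$ matrix, so the event ``$(i,j)$ is bad'' is exactly ``$T_{ij}=0$''. By the definition of $(i,j)\sim\widetilde{T}$,
\[
\Pr_{(i,j)\sim\widetilde{T}}[(i,j)\text{ bad}] \;=\; \frac{\sum_{(i,j):\,T_{ij}=0}\widetilde{T}_{ij}^{2}}{\norm{\widetilde{T}}_{F}^{2}}.
\]

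The key observation for the numerator is that on precisely those entries where $T_{ij}=0$, the entry of $\widetilde{T}$ coincides with the error term, since $\widetilde{T}_{ij}=\widetilde{T}_{ij}-T_{ij}$ there. Hence the numerator equals $\sum_{(i,j):\,T_{ij}=0}(T_{ij}-\widetilde{T}_{ij})^{2}$, which is a partial sum of the full squared Frobenius error and is therefore at most the whole sum. Using the hypothesis this gives the upper bound $\norm{T-\widetilde{T}}_{F}^{2}\le\epsilon^{2}\norm{T}_{F}^{2}$ on the numerator.

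For the denominator I would produce a lower bound that no longer mentions $\widetilde{T}$ by invoking the reverse triangle inequality: $\norm{\widetilde{T}}_{F}\ge\norm{T}_{F}-\norm{T-\widetilde{T}}_{F}\ge(1-\epsilon)\norm{T}_{F}$, and hence $\norm{\widetilde{T}}_{F}^{2}\ge(1-\epsilon)^{2}\norm{T}_{F}^{2}$. Combining the two estimates, the common factor $\norm{T}_{F}^{2}$ cancels and leaves exactly $\left(\epsilon/(1-\epsilon)\right)^{2}$, as claimed.

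I do not expect a genuine obstacle here; the only point requiring a little care is that the denominator bound is meaningful only when $\epsilon<1$, which is implicit (the stated bound is vacuous otherwise). The essential insight—the step that really drives the argument—is that the probability mass $\widetilde{T}$ assigns to bad entries is \emph{literally} the approximation error incurred on those entries, so it is automatically controlled by the global Frobenius-norm guarantee; everything else is bookkeeping.
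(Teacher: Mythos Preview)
Your proposal is correct and is essentially the same argument as the paper's: the paper also bounds the numerator by noting that $\sum_{(i,j)\text{ bad}}\widetilde{T}_{ij}^{2}$ is part of the squared Frobenius error (it writes out the decomposition $\norm{T-\widetilde{T}}_{F}^{2}=\sum_{\text{good}}(1-\widetilde{T}_{ij})^{2}+\sum_{\text{bad}}\widetilde{T}_{ij}^{2}$ explicitly), and bounds the denominator via the triangle inequality $\norm{\widetilde{T}}_{F}\ge(1-\epsilon)\norm{T}_{F}$. Your remark about the implicit assumption $\epsilon<1$ is a useful clarification that the paper leaves unstated.
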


\begin{proof}
By the theorem's assumption and triangle inequality, we have
\[
(1+\epsilon)\norm{T}_F \geq \norm{\widetilde{T}}_F \geq (1-\epsilon)\norm{T}_F.
\]
We can rewrite the approximation guarantee as
\al{ 
\epsilon^2  \norm{ T }^2_F \geq   \norm{ T - \widetilde{T} }_{F}^2 = \sum_{(i,j):good} (1 - \widetilde{T}_{ij})^{2} + \sum_{(i,j):bad} \tilde{T}_{ij}^2 \geq \sum_{(i,j):bad} \widetilde{T}_{ij}^2 
} 
The probability that sampling from $\widetilde{T}$ provides a bad recommendation is 
\al{ 
\Pr [(i,j) \mbox{ bad}]= \frac{ \sum_{(i,j):bad} \widetilde{T}_{ij}^2}{\norm{ \widetilde{T}} ^2_F} \leq  \frac{ \sum_{(i,j):bad} \widetilde{T}_{ij}^2}{ (1-\epsilon)^2 \norm{ T} ^2_F}  \leq \left( \frac{\epsilon}{1-\epsilon} \right)^2. 
} 
\end{proof}

The above can be rewritten as follows denoting the $i$-th row of $T$ by $T_{i}$, 
\begin{equation}\label{lemma}
\Pr[(i,j) \mbox{ bad}] = \frac{ \sum_{(i,j):bad} \widetilde{T}_{ij}^2}{\norm{ \widetilde{T}} ^2_F} = \sum_{i \in [m]} \frac{ \norm{\widetilde{T}_{i}}^{2}_{F} }{ \norm{ \widetilde{T} } ^2_F} \cdot \frac{\sum_{j:(i,j) bad} \widetilde{T}_{ij}^2}{  \norm{\widetilde{T}_{i}}^{2}_{F}  } \leq
\left( \frac{\epsilon}{1-\epsilon} \right)^2.
\end{equation}

We can see that the above lemma provides the guarantee that the probability of a bad recommendation for an average user is small, where the average is weighted according to the weight of each row. In other words, if we care more about users that have many products they like and less for users that like almost nothing, then the sampling already guarantees good performance.

While this might be sufficient in some scenarios, it would be nice to also have a guarantee that the recommendation is good for most users, where now every user has the same importance. Note that only with the closeness guarantee in the Frobenius norm, this may not be true, since imagine the case where almost all rows of the matrix $T$ have extremely few 1s and a few rows have almost all 1s. In this case, it might be that the approximation matrix is close to the preference matrix according to the Frobenius norm, nevertheless the recommendation system provides good recommendations only for the very heavy users and bad ones for almost everyone else.

Hence, if we would like to show that the recommendation system provides good recommendations for most users, then we need to assume that most users are ``typical", meaning that the number of products that are good recommendations for them is close to the average. We cannot expect to provide good recommendations for example to users that like almost nothing. One way to enforce this property is, for example, to define good recommendations for each user as the 100 top products, irrespective of how high their utilities are or whether there are even more good products for some users. In what follows we prove our results in most generality, where we introduce parameters for how many users are typical and how far from the average the number of good recommendations of a typical user can be.   

\begin{theorem}\label{rec}
Let $T$ be an $m \times n$ matrix. Let $S$ be a subset of rows of size $|S| \geq (1-\zeta)m$ (for $\zeta>0$) such that for all $i \in S$, 
\begin{equation}\label{assumption}
\frac{1}{1+\gamma}\frac{\norm{ T} _F^2}{m} \leq \norm { T_{i} }^{2}   \leq (1+\gamma) \frac{\norm{ T} _F^2}{m}
\end{equation}
for some $\gamma>0$. 
Let $\widetilde{T}$ be an approximation of the matrix $T$ such that $ \norm{ T - \widetilde{T} }_{F} \leq \epsilon \norm{  T }_F$. 
Then, there exists a subset $S' \subseteq S$ of size at least $(1-\delta-\zeta)m$ (for $\delta>0$), such that on average over the users in $S'$, the probability that a sample from the row $\widetilde{T}_{i}$ is a bad recommendation is 
\[ \Pr_{i \sim \mathcal{U}_{S'}, j \sim \widetilde{T}_i}[(i,j) \emph{ bad}] \leq \frac{\left( \frac{\epsilon(1+\epsilon)}{1-\epsilon} \right)^2}{ \left( 1/\sqrt{1+\gamma} - \epsilon/\sqrt{\delta}\right)^2  (1-\delta - \zeta)}.
\]
\end{theorem}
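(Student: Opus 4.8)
The plan is to reduce the uniform average over users to the norm-weighted average already controlled by \lemref{matrix}, paying only for the distortion between the two weightings. Writing $p_i := \frac{\sum_{j : (i,j)\text{ bad}} \widetilde{T}_{ij}^2}{\norm{\widetilde{T}_i}^2}$ for the per-row bad probability, the quantity to bound is $\Pr_{i \sim \mathcal{U}_{S'}, j \sim \widetilde{T}_i}[(i,j)\text{ bad}] = \frac{1}{|S'|}\sum_{i \in S'} p_i$. The obstacle in passing from the weighted sum in \eqnref{lemma} to this uniform sum is that $p_i$ carries $\norm{\widetilde{T}_i}^2$ in its denominator, so a user whose approximate row $\widetilde{T}_i$ has collapsed to a tiny norm can have $p_i$ close to $1$ and spoil the average. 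The entire difficulty is therefore to discard the few users for which $\norm{\widetilde{T}_i}$ is much smaller than the true $\norm{T_i}$, and to lower bound $\norm{\widetilde{T}_i}$ on the rest.

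First I would isolate the high-error rows. Since $\norm{T - \widetilde{T}}_F^2 = \sum_i \norm{T_i - \widetilde{T}_i}^2 \leq \epsilon^2 \norm{T}_F^2$, Markov's inequality shows that at most $\delta m$ rows satisfy $\norm{T_i - \widetilde{T}_i}^2 > \frac{\epsilon^2 \norm{T}_F^2}{\delta m}$. Removing this set of size at most $\delta m$ from $S$ (itself of size at least $(1-\zeta)m$) yields the desired $S' \subseteq S$ with $|S'| \geq (1-\delta-\zeta)m$, on which every row obeys $\norm{T_i - \widetilde{T}_i} \leq \frac{\epsilon}{\sqrt{\delta}}\frac{\norm{T}_F}{\sqrt{m}}$.

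Next I would lower bound the approximate row norm on $S'$. Each $i \in S' \subseteq S$ satisfies the typicality bound \eqref{assumption}, so $\norm{T_i} \geq \frac{1}{\sqrt{1+\gamma}}\frac{\norm{T}_F}{\sqrt{m}}$; combining this with the row-error bound through the triangle inequality $\norm{\widetilde{T}_i} \geq \norm{T_i} - \norm{T_i - \widetilde{T}_i}$ gives $\norm{\widetilde{T}_i} \geq \left(\frac{1}{\sqrt{1+\gamma}} - \frac{\epsilon}{\sqrt{\delta}}\right)\frac{\norm{T}_F}{\sqrt{m}}$, which is exactly the factor appearing in the denominator of the claim (and is meaningful precisely when it is positive).

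Finally I would assemble the bound. Plugging the uniform lower bound on $\norm{\widetilde{T}_i}^2$ into the definition of $p_i$ and summing over $S'$ gives $\sum_{i \in S'} p_i \leq \frac{m}{\left(1/\sqrt{1+\gamma} - \epsilon/\sqrt{\delta}\right)^2 \norm{T}_F^2} \sum_{i \in S'} \sum_{j : \text{bad}} \widetilde{T}_{ij}^2$, and the remaining double sum is at most the total bad mass $\sum_{(i,j):\text{bad}} \widetilde{T}_{ij}^2$. Using \lemref{matrix} in the form $\sum_{(i,j):\text{bad}} \widetilde{T}_{ij}^2 \leq \left(\frac{\epsilon}{1-\epsilon}\right)^2 \norm{\widetilde{T}}_F^2$ together with $\norm{\widetilde{T}}_F \leq (1+\epsilon)\norm{T}_F$ bounds this mass by $\left(\frac{\epsilon(1+\epsilon)}{1-\epsilon}\right)^2 \norm{T}_F^2$. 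Dividing by $|S'| \geq (1-\delta-\zeta)m$ then produces exactly the stated bound. The one genuinely essential step is the pruning of high-error rows together with the triangle-inequality lower bound on $\norm{\widetilde{T}_i}$; everything else is bookkeeping in converting between the Frobenius and uniform weightings.
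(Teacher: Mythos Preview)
Your proposal is correct and follows essentially the same route as the paper: the Markov-type pruning to isolate $S'$, the triangle-inequality lower bound on $\norm{\widetilde{T}_i}$ for $i\in S'$, and the conversion from the weighted bound of \lemref{matrix} to the uniform average are exactly the paper's steps. The only cosmetic difference is bookkeeping: the paper re-expresses the lower bound on $\norm{\widetilde{T}_i}^2$ in terms of $\norm{\widetilde{T}}_F^2$ (absorbing the $(1+\epsilon)^2$ there) before invoking \eqnref{lemma}, whereas you keep it in terms of $\norm{T}_F^2$ and instead bound the total bad mass via $\norm{\widetilde{T}}_F \leq (1+\epsilon)\norm{T}_F$; the arithmetic is identical.
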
 

\begin{proof}
We first use the guarantee that the matrices $T$ and $\widetilde{T}$ are close in the Frobenius norm to conclude that there exist at least $(1-\delta)m$ users for which 
\begin{equation} \label{frob}
 \norm{ T_{i}-\widetilde{T}_{i} } ^2 \leq \frac{\epsilon^2 \norm{ T }^2_F}{\delta m}.
\end{equation}
If not, summing the error of the strictly more than $\delta m$ users for which equation \ref{frob} is false we get the following contradiction, 
\[
 \norm{ T - \widetilde{T} } _{F}^2  > \delta m \frac{\epsilon^2 \norm{ T }^2_F}{\delta m} > \epsilon^2 \norm{ T }^2_F.
\]
Then, at least $(1-\delta-\zeta)m$ users both satisfy equation \ref{frob} and belong to the set $S$. Denote this set by $S'$.  
Using equations \eqref{assumption} and \eqref{frob} and the triangle inequality $\norm{ \widetilde{T}_{i}}   \geq \norm{T_{i}} - \norm{ T_{i}- \widetilde{T}_{i}}$, we have that for all users in $S'$
\begin{equation}\label{normS} 
\norm{ \widetilde{T}_{i}}^{2}_{F}  \geq  \frac{\norm{T}_{F}^{2}}{m} \en{ \frac{1}{\sqrt{1+\gamma}} - \frac{\epsilon}{\sqrt{\delta}} }^{2} \geq \frac{\norm{\widetilde{T}}_{F}^{2}}{(1+\epsilon)^2 m} \en{ \frac{1}{\sqrt{1+\gamma}} - \frac{\epsilon}{\sqrt{\delta}} }^{2}.
\end{equation}
We now use equations \eqref{lemma} and \eqref{normS} and have 
\begin{equation}\label{final}
\left( \frac{\epsilon}{1-\epsilon} \right)^2  \geq \sum_{i \in [m]} \frac{ \norm{ \widetilde{T}_{i}}^{2}_F }{\norm{ \widetilde{T}} ^2_F} \cdot \frac{\sum_{j:(i,j) bad} \widetilde{T}_{ij}^2}{  \norm{ \widetilde{T}_{i}}^{2}_F}\geq 
\frac{\left( 1/\sqrt{1+\gamma} - \epsilon/\sqrt{\delta}\right)^2}{(1+\epsilon)^2 m} \sum_{i \in {S'}} \frac{\sum_{j:(i,j) bad} \widetilde{T}_{ij}^2}{ \norm{ \widetilde{T}_{i}}^{2}_F}. 
\end{equation}
We are now ready to conclude that, 
\[
\Pr_{i \sim \mathcal{U}_{S'}, j \sim \widetilde{T}_i}[(i,j) \mbox{ bad}] = \frac{1}{|S'|}\sum_{i \in S'} \frac{\sum_{j:(i,j) bad} \widetilde{T}_{ij}^2}{  \norm{ \widetilde{T}_{i}}^{2}_F} 
\leq \frac{\left( \frac{\epsilon(1+\epsilon)}{1-\epsilon} \right)^2}{ \left( 1/\sqrt{1+\gamma} - \epsilon/\sqrt{\delta}\right)^2 (1-\delta-\zeta)}.
\]
\end{proof}

We note that by taking reasonable values for the parameters, the error does not increase much from the original error. For example, if we assume that $90\%$ of the users have preferences between $1/1.1$ and $1.1$ times the average, then the error over the typical users has increased by at most a factor of $1.5$. 
Note also that we can easily make the quality of the recommendation system even better if we are willing to recommend a small number of products, instead of just one, and are satisfied if at least one of them is a good recommendation. This is in fact what happens in practical systems.

\section{Matrix Sampling} 

 We showed in the previous section that providing good recommendations reduces to being able to sample from a matrix $\widetilde{T}$ which is a good approximation to the recommendation matrix $T$ in the Frobenius norm. We will now define the approximation matrix $\widetilde{T}$, by extending known matrix reconstruction techniques. The reconstruction algorithms provides good guarantees under the assumption that the recommendation matrix $T$ has a good $k$-rank approximation for a small $k$, i.e. $\norm{T-T_k}_F \leq \epsilon \norm{T}_F$ (for some small constant $\epsilon \geq 0$).

Let us now briefly describe the matrix reconstruction algorithms. In general, the input to the reconstruction algorithm is a subsample of some matrix $A$. There are quite a few different ways of subsampling a matrix, for example, sampling each element of the matrix with some probability or sampling rows and/or columns of the matrix according to some distribution. We present here in more detail the first case as is described in the work of Achlioptas and McSherry \cite{AM01}. Each element of the matrix $A$ that has size $m \times n$ is sampled with probability $p$ and rescaled so as to obtain the random matrix $\widehat{A}$ where each element is equal to $\widehat{A}_{ij}=A_{ij}/p$ with probability $p$ and 0 otherwise. Note that $E[\widehat{A}]=A$ and that it is assumed that $k$ and $\norm{A}_F$ are known.

The reconstruction algorithm computes the projection of the input matrix $\widehat{A}$ onto its $k$-top singular vectors; we denote the projection 
by $\widehat{A}_k$. The analysis of the algorithm shows that the approximation error $\norm{A - \widehat{A}_{k}}$ is not much bigger than $\norm{A - A_{k}}$. Projecting onto the top $k$ singular vectors of the subsampled matrix $\widehat{A}$ thus suffices to reconstruct a matrix approximating $A$. 

The intuition for the analysis is that $\widehat{A}$ is a matrix whose entries are independent random variables, thus with high probability the top $k$ spectrum of $\widehat{A}$ will be close to the one of its expectation matrix $E[\widehat{A}]=A$. This intuition was proven in \cite{AM01}.

\begin{theorem}\label{AM}
\cite{AM01} Let $A\in \R^{m\times n}$ be a matrix and $b=\max_{ij} A_{ij}$.
Define the matrix $\widehat{A}$ to be a random matrix obtained by subsampling with probability 
$p = 16 n b^2 / (\eta \norm{A}_F)^2$ (for $\eta>0$) and rescaling, that is $\widehat{A}_{ij} = A_{ij}/p$ with probability $p$ and $0$ otherwise. With probability at least 
$1- exp(-19 (\log n)^{4})$ we have for any $k$ 
\al{ \label{Achl}
\norm{A-\widehat{A}_{k}}_{F} &\leq   \norm{A-A_k}_F + 3\sqrt{ \eta} k^{1/4} \norm{A}_F.
}  
\end{theorem}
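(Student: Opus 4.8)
The plan is to separate the argument into a deterministic linear-algebraic step and a probabilistic step, where the only random input to the deterministic step is a bound on the spectral norm of the error matrix $N=\widehat{A}-A$. Concretely, I would first reduce everything to establishing that, with the stated probability, $\norm{N}_2 \le c\,\eta\norm{A}_F$ for a small absolute constant $c$; the Frobenius reconstruction bound then follows deterministically, with no further use of randomness.

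For the deterministic step, let $\widehat{Q}$ denote the orthogonal projector onto the span of the top-$k$ right singular vectors of $\widehat{A}$, so that $\widehat{A}_k=\widehat{A}\widehat{Q}$, and let $Q$ be the corresponding projector for $A$, so that $A_k=AQ$. Writing $A-\widehat{A}_k = A(I-\widehat{Q})-N\widehat{Q}$ and noting that $N\widehat{Q}$ has rank at most $k$, hence $\norm{N\widehat{Q}}_F\le\sqrt{k}\,\norm{N}_2$, the task reduces to controlling $\norm{A(I-\widehat{Q})}_F$. The essential input here is the optimality of $\widehat{A}_k$: since $\widehat{A}\widehat{Q}$ is the best rank-$k$ approximation of $\widehat{A}$ in Frobenius norm, it captures at least as much mass as projecting onto $Q$, i.e. $\norm{\widehat{A}\widehat{Q}}_F\ge\norm{\widehat{A}Q}_F$. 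Combining this with $\norm{\widehat{A}\widehat{Q}}_F\le\norm{A\widehat{Q}}_F+\sqrt{k}\norm{N}_2$ and $\norm{\widehat{A}Q}_F\ge\norm{AQ}_F-\sqrt{k}\norm{N}_2$ gives $\norm{A\widehat{Q}}_F\ge\norm{AQ}_F-2\sqrt{k}\norm{N}_2$. Using $\norm{A(I-\widehat{Q})}_F^2=\norm{A}_F^2-\norm{A\widehat{Q}}_F^2$ and $\norm{A-A_k}_F^2=\norm{A}_F^2-\norm{AQ}_F^2$ together with $\norm{AQ}_F\le\norm{A}_F$, I obtain $\norm{A(I-\widehat{Q})}_F^2\le\norm{A-A_k}_F^2+4\sqrt{k}\norm{N}_2\norm{A}_F$. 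Taking square roots via $\sqrt{x+y}\le\sqrt{x}+\sqrt{y}$ and adding back the $\sqrt{k}\norm{N}_2$ term yields $\norm{A-\widehat{A}_k}_F\le\norm{A-A_k}_F+2k^{1/4}\sqrt{\norm{N}_2\norm{A}_F}+\sqrt{k}\norm{N}_2$. Substituting $\norm{N}_2\le c\eta\norm{A}_F$ makes the middle term $O(\sqrt{\eta}\,k^{1/4}\norm{A}_F)$ and the last term lower order in the relevant range of $\eta$, so the constants can be absorbed into the factor $3$.

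The main obstacle is the probabilistic step: showing $\norm{N}_2\le c\eta\norm{A}_F$ with failure probability at most $\exp(-19(\log n)^4)$. The entries of $N$ are independent and mean zero, with $N_{ij}=A_{ij}(1-p)/p$ with probability $p$ and $N_{ij}=-A_{ij}$ otherwise; thus each entry is bounded in magnitude by $b/p$ and has variance at most $b^2/p$. The sampling rate $p=16nb^2/(\eta\norm{A}_F)^2$ is calibrated precisely so that $b/\sqrt{p}=\eta\norm{A}_F/(4\sqrt{n})$, which is the per-entry standard-deviation scale at which a random matrix with independent entries of this size has spectral norm of order $\eta\norm{A}_F$. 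To turn this into a tail bound with the stated (very small) failure probability, I would use the moment/trace method, bounding $\E[\mathrm{tr}((NN^t)^{\ell})]$ for $\ell$ equal to a suitable power of $\log n$ and applying Markov's inequality; the combinatorial bookkeeping of the contributing closed walks, controlled using the entrywise bound $b/p$ and the variance bound $b^2/p$, is what produces the $\exp(-19(\log n)^4)$ tail. This random-matrix estimate is the technical heart of \cite{AM01} and the step I expect to require by far the most care, the remainder being the elementary algebra above.
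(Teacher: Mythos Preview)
The paper does not actually prove this theorem: it is quoted verbatim from Achlioptas and McSherry \cite{AM01} (see the sentence ``This intuition was proven in \cite{AM01}'' immediately preceding the statement), and is used as a black box in the proofs of Theorems \ref{sigma} and \ref{sigmakappa}. So there is no ``paper's own proof'' to compare against.

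That said, your sketch is a faithful outline of the argument in \cite{AM01}. The two-stage structure you describe --- a deterministic perturbation lemma turning a spectral-norm bound on $N=\widehat{A}-A$ into the Frobenius reconstruction bound, followed by a random-matrix estimate on $\norm{N}_{2}$ via high trace moments --- is exactly how Achlioptas and McSherry proceed. Your deterministic step (comparing $\norm{A\widehat{Q}}_{F}$ with $\norm{AQ}_{F}$ through the optimality of $\widehat{A}_{k}$ and then using Pythagoras) matches their Lemma that yields the $k^{1/4}$ dependence, and your identification of the trace/moment method as the source of the $\exp(-19(\log n)^{4})$ tail is correct. One small cleanup: to avoid a case split on the sign of $\norm{AQ}_{F}-2\sqrt{k}\norm{N}_{2}$, bound the difference of squares directly as $\norm{AQ}_{F}^{2}-\norm{A\widehat{Q}}_{F}^{2}\le (\norm{AQ}_{F}+\norm{A\widehat{Q}}_{F})\cdot 2\sqrt{k}\norm{N}_{2}\le 4\sqrt{k}\norm{N}_{2}\norm{A}_{F}$. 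Your remark that the $\sqrt{k}\norm{N}_{2}$ term is lower order is justified precisely because the bound is only informative when $\sqrt{\eta}\,k^{1/4}$ is small, in which range $\eta\sqrt{k}=(\sqrt{\eta}\,k^{1/4})^{2}$ is indeed dominated.
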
 

Here, we will need to extend this result in order to be able to use it together with our quantum procedure.
First, we will consider the matrix which is not the projection on the $k$-top singular vectors, but the projection on the singular vectors whose corresponding singular values are larger than a threshold. For any matrix $A$ and any $\sigma \geq 0$, we denote by $A_{ \geq \sigma}$ the projection of the matrix $A$ onto the space spanned by the singular vectors whose corresponding singular value is bigger than $\sigma$. 
Intuitively, since the spectrum of the matrix is highly concentrated on the top $k$ singular vectors, then the corresponding singular values should be of order $O(\frac{\norm{A}_F}{\sqrt{k}})$. 

Note that we do not use anything about how the matrix $\widehat{A}$ was generated, only that it satisfies equation \ref{Achl}. Hence our results hold for other matrix reconstruction algorithms as well, as long as we have a similar guarantee in the Frobenius norm.

\begin{theorem}\label{sigma}
Let $A\in \R^{m\times n}$ be a matrix such that $\max_{ij} A_{ij}=1$. 
Define the matrix $\widehat{A}$ to be a random matrix obtained by subsampling with probability 
$p = 16 n/ \eta^{2} (\norm{A}_F)^2$ (for $\eta>0$) and rescaling, that is $\widehat{A}_{ij} = A_{ij}/p$ with probability $p$ and $0$ otherwise. Let $\mu>0$ a threshold parameter and denote $\sigma=\sqrt{ \frac{\mu}{k}}||\widehat{A}||_F$. With probability at least 
$1- exp(-19 (\log n)^{4})$ we have 
\al{ 
\norm{A-\widehat{A}_{\geq \sigma}}_{F} &\leq \norm{A-A_k}_F + (3\sqrt{ \eta} k^{1/4} \mu^{-1/4}+\sqrt{\mu/p}) \norm{A}_F.
}  
If $\norm{A-A_k}_F \leq \epsilon \norm{A}_F$ for some $\epsilon >0$ and $\norm{A}_{F} \geq \frac{ 36\sqrt{2} (nk)^{1/2}} { \epsilon^{3}}$ then we can choose $\eta, \mu$ such that $ \norm{A-\widehat{A}_{\geq \sigma}}_{F} \leq 3\epsilon \norm{A}_F$.
\end{theorem}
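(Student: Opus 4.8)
The plan is to reduce to the fixed-rank reconstruction bound of \thmref{AM}, which holds for \emph{every} rank parameter, by controlling how many singular values of $\widehat A$ survive the threshold. Write $\ell$ for the number of singular values of $\widehat A$ that strictly exceed $\sigma = \sqrt{\mu/k}\,\|\widehat A\|_F$, so that $\widehat A_{\geq\sigma} = \widehat A_{\ell}$ is an honest rank-$\ell$ truncation. The first observation is that $\ell$ cannot be too large: since $\sum_i \hat\sigma_i^2 = \|\widehat A\|_F^2$ and each of the $\ell$ surviving values exceeds $\sigma$, we get $\ell\,\sigma^2 \leq \|\widehat A\|_F^2$, i.e.\ $\ell \leq k/\mu$. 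I would also record the elementary concentration fact $\mathbb{E}[\|\widehat A\|_F^2] = \|A\|_F^2/p$, from which a Chernoff/Bernstein bound yields $\|\widehat A\|_F^2 \leq \|A\|_F^2/p$ with high probability; this event I fold into the failure probability inherited from \thmref{AM}.

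Next I would split on whether $\ell \geq k$ or $\ell < k$. When $\ell \geq k$, I apply \thmref{AM} with rank parameter $\ell$ in place of $k$; since retaining more top singular vectors of $A$ only decreases the residual we have $\|A - A_\ell\|_F \leq \|A - A_k\|_F$, and since $\ell \leq k/\mu$ we have $\ell^{1/4} \leq k^{1/4}\mu^{-1/4}$, giving directly
\[
\|A - \widehat A_{\geq\sigma}\|_F \;\leq\; \|A - A_k\|_F + 3\sqrt{\eta}\,k^{1/4}\mu^{-1/4}\|A\|_F ,
\]
which is within the claimed bound. When $\ell < k$, I instead route through $\widehat A_k$ by the triangle inequality, bounding $\|A - \widehat A_k\|_F$ by \thmref{AM} and bounding the correction $\|\widehat A_k - \widehat A_{\geq\sigma}\|_F^2 = \sum_{i=\ell+1}^{k}\hat\sigma_i^2 \leq k\sigma^2 = \mu\|\widehat A\|_F^2$, so that $\|\widehat A_k - \widehat A_{\geq\sigma}\|_F \leq \sqrt{\mu}\,\|\widehat A\|_F \leq \sqrt{\mu/p}\,\|A\|_F$ via the concentration bound. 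Using $\mu \leq 1$ to replace $k^{1/4}$ by $k^{1/4}\mu^{-1/4}$ in the remaining term, this case also fits under the stated bound, which completes the first part.

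For the second part I would choose $\eta$ and $\mu$ to balance the two error terms $3\sqrt{\eta}k^{1/4}\mu^{-1/4}$ and $\sqrt{\mu/p}$ against $\epsilon$. Substituting $p = 16n/(\eta^2\|A\|_F^2)$ turns the second term into $\sqrt{\mu}\,\eta\|A\|_F/(4\sqrt n)$; setting each term equal to $\epsilon$ produces a consistent pair $(\eta,\mu)$ with $\mu \propto \sqrt{nk}/(\epsilon\|A\|_F)$. Since $\|A - A_k\|_F \leq \epsilon\|A\|_F$ contributes one further $\epsilon\|A\|_F$, the total is at most $3\epsilon\|A\|_F$. The hypothesis $\|A\|_F \geq 36\sqrt2\,(nk)^{1/2}/\epsilon^3$ is exactly what makes this choice feasible: it forces $\mu \leq 1$ (so the case analysis applies and $\mu^{-1/4}\geq 1$) while keeping $p \leq 1$ a legitimate probability.

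The main obstacle, and the crux of the argument, is the first part's passage from a fixed-rank guarantee to a threshold guarantee. The clean way around it is the two-sided use of \thmref{AM}: invoking it at the data-dependent rank $\ell$ when the threshold keeps many vectors, and correcting through $\widehat A_k$ when it keeps few, with the uniform estimate $\ell \leq k/\mu$ tying the two regimes together. The only steps requiring genuine care beyond bookkeeping are ensuring the concentration of $\|\widehat A\|_F$ is strong enough to convert $\sqrt{\mu}\,\|\widehat A\|_F$ into $\sqrt{\mu/p}\,\|A\|_F$ without spoiling constants, and verifying that the parameter choice in the second part stays inside the feasible region $\mu\le 1$, $p\le 1$.
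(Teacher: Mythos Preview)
Your argument is essentially the paper's: same definition of $\ell$ with $\widehat A_{\ge\sigma}=\widehat A_\ell$, same bound $\ell\le k/\mu$, same Chernoff step for $\|\widehat A\|_F^2$, same case split on $\ell\gtrless k$, and the same appeal to \thmref{AM}. Two minor points are worth flagging. First, concentration around the mean $\mathbb E[\|\widehat A\|_F^2]=\|A\|_F^2/p$ only gives $\|\widehat A\|_F^2\le 2\|A\|_F^2/p$ with high probability, not $\le \|A\|_F^2/p$; the paper carries $\sqrt{2\mu/p}$ through the proof for this reason. Second, in the case $\ell<k$ the paper applies \thmref{AM} at rank $\ell$ (not $k$) and then bounds $\|A-A_\ell\|_F\le\|A-A_k\|_F+\|A_k-A_\ell\|_F$, whereas you apply it at rank $k$ and route through $\widehat A_k$. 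Your variant is actually cleaner---the quantity $\|\widehat A_k-\widehat A_\ell\|_F$ genuinely involves the singular values of $\widehat A$ that the threshold controls, while the paper's step bounding $\|A_k-A_\ell\|_F$ by those same $\widehat A$-singular values is at least notationally suspect---but it costs you the side assumption $\mu\le 1$ that you note; the paper's route gets $\ell^{1/4}\le(k/\mu)^{1/4}$ directly without it. For the second part, your balancing of the two error terms against $\epsilon$ recovers exactly the paper's choice $\mu=\epsilon^2 p/2$, and the feasibility constraint you derive on $\|A\|_F$ is the stated hypothesis.
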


\begin{proof}
Let $\sigma_i$ denote the singular values of $\widehat{A}$. Let $\ell$ the largest integer for which $\sigma_\ell \geq \sqrt{ \frac{\mu}{k}}||\widehat{A}||_F$. Note that $\ell \leq \frac{k}{\mu}$. Then, by theorem \ref{AM}, we have
\begin{eqnarray*}
\norm{A-\widehat{A}_{\geq \sigma}}_{F}  =  \norm{A-\widehat{A}_{\ell}}_{F} 
 \leq  \norm{A- A_{\ell}}_{F}+  3\sqrt{ \eta} \ell^{1/4} \norm{A}_F .
\end{eqnarray*}

Define the random variable $X= \sum_{i,j} \widehat{A}_{ij}^{2}$ so that $X= \norm{\widehat{A}}_{F}^{2}$ and 
$E[X ] = \norm{A}_{F}^{2}/p$. The random variables $\widehat{A}_{ij}$ are independent, using the Chernoff bounds 
we have $\Pr [ \norm{\widehat{A}}_{F}^{2} > (1+ \beta) \norm{A}_{F}^{2}/p ] \leq e^{-\beta^{2} \norm{A}_{F}^{2}/3p}$
for $\beta \in [0, 1]$. The probability that $\norm{\widehat{A}}_{F}^{2} > 2\norm{A}_{F}^{2}/p$ is exponentially small.

We distinguish two cases.

If $\ell \geq k$, then $ \norm{A - A_\ell}_F \leq \norm{A - A_k}_F,$ since $A_\ell$ contains more of the singular vectors of $A$.

If $k > \ell$, then $ \norm{A - A_\ell}_F \leq \norm{A - A_k}_F + \norm{A_k-A_\ell}_F$, which dominates the two cases. For the second term we have $ \norm{A_k-A_\ell}_F^2 = \sum_{i=\ell+1}^k \sigma_i^2 \leq k \frac{\mu}{k}\norm{\widehat{A}}_F^2 \leq \frac{2\mu}{p} \norm{A}_F^2$.
Hence,
\[
\norm{A-\widehat{A}_{\geq \sigma}}_{F}  \leq \norm{A-A_k}_F + (3\sqrt{ \eta} k^{1/4} \mu^{-1/4}+\sqrt{2\mu/p}) \norm{A}_F.
\]
\noindent

If $\norm{A-A_k}_F \leq \epsilon \norm{A}_F$, for some $\epsilon \geq 0$ then we choose $\mu = \epsilon^{2} p/2$ and we can select any $\eta \leq \frac{2n^{1/4}  \epsilon^{3/2}} { 3 (2k)^{1/4} \norm{A}_{F}^{1/2}}$ so that 
$3\sqrt{ \eta} k^{1/4} \mu^{-1/4} \leq \epsilon$ and the overall error $ \norm{A-\widehat{A}_{\geq \sigma}}_{F} \leq 3\epsilon \norm{A}_F$. Indeed,
\als{ 
3\sqrt{ \eta} k^{1/4} \mu^{-1/4} = \frac{ 3\eta^{1/2} (2k)^{1/4} }{\epsilon^{1/2}  p^{1/4}  } = \frac{ 3\eta  \norm{A}_{F}^{1/2} (2k)^{1/4}  }{2\epsilon^{1/2} n^{1/4}  } \leq \epsilon  
} 
Note that for this choice of $\mu$ and $\eta$, the sampling probability must be at least $p \geq \frac{ 36\sqrt{2} (nk)^{1/2}} { \norm{A}_{F} \epsilon^{3}}$, the assumption in the theorem statement ensures that $p\leq 1$. 
\end{proof}

Our quantum procedure will almost produce this projection. In fact, we will need to consider a family of matrices which denote the projection of the matrix $A$ onto the space spanned by the union of the singular vectors whose corresponding singular value is bigger than $\sigma$ and also some subset of singular vectors whose corresponding singular value is in the interval $[(1-\kappa)\sigma, \sigma)$. Think of $\kappa$ as a constant, for example $1/3$. This subset could be empty, all such singular vectors, or any in-between subset. We denote by$A_{\geq \sigma, \kappa}$ any matrix in this family.

The final theorem we will need is the following

\begin{theorem}\label{sigmakappa}
Let $A\in \R^{m\times n}$ be a matrix and $\max_{ij} A_{ij}=1$. 
Define the matrix $\widehat{A}$ to be a random matrix obtained by subsampling with probability 
$p = 16 n / (\eta \norm{A}_F)^2$ and rescaling, that is $\widehat{A}_{ij} = A_{ij}/p$ with probability $p$ and $0$ otherwise. Let $\mu>0$ a threshold parameter and denote $\sigma=\sqrt{ \frac{\mu}{k}}||\widehat{A}||_F$. Let $\kappa >0$ a precision parameter. 
With probability at least 
$1- exp(-19 (\log n)^{4})$,
\al{ 
\norm{A-\widehat{A}_{\geq \sigma,\kappa}}_{F}  \leq 3\norm{A-A_k}_F + 
\left( 3\sqrt{ \eta} k^{1/4} \mu^{-1/4}(2+(1-\kappa)^{-1/2})+(3-\kappa)\sqrt{2\mu/p}\right) \norm{A}_F.
}  
If $\norm{A-A_k}_F \leq \epsilon \norm{A}_F$ for some $\epsilon >0$ and $\norm{A}_{F} \geq \frac{ 36\sqrt{2} (nk)^{1/2}} { \epsilon^{3}}$ then we can choose $\eta, \mu$ such that $ \norm{A-\widehat{A}_{\geq \sigma, \kappa}}_{F} \leq 9\epsilon \norm{A}_F$.
\end{theorem}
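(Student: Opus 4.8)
The plan is to reduce the bound for the hybrid projection $\widehat{A}_{\geq\sigma,\kappa}$ to the two clean threshold projections $\widehat{A}_{\geq\sigma}$ and $\widehat{A}_{\geq(1-\kappa)\sigma}$, both of which are already controlled by Theorem~\ref{sigma}. Let $\sigma_1\geq\sigma_2\geq\cdots$ be the singular values of $\widehat{A}$, let $\ell$ be the number that are $\geq\sigma$ and $\ell'$ the number that are $\geq(1-\kappa)\sigma$, so that $\ell\leq\ell'$ and $\widehat{A}_\ell=\widehat{A}_{\geq\sigma}$, $\widehat{A}_{\ell'}=\widehat{A}_{\geq(1-\kappa)\sigma}$. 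By the definition of the family, $\widehat{A}_{\geq\sigma,\kappa}=\widehat{A}_\ell+E$, where $E=\sum_{i\in I_2}\sigma_i u_i v_i^t$ is supported on some subset $I_2\subseteq\{\ell+1,\dots,\ell'\}$ of the ``middle band'' of singular vectors.

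First I would establish the key reduction
\[ \norm{A-\widehat{A}_{\geq\sigma,\kappa}}_F \leq 2\norm{A-\widehat{A}_\ell}_F + \norm{A-\widehat{A}_{\ell'}}_F. \]
Since $A-\widehat{A}_{\geq\sigma,\kappa}=(A-\widehat{A}_\ell)-E$, the triangle inequality gives $\norm{A-\widehat{A}_{\geq\sigma,\kappa}}_F\leq\norm{A-\widehat{A}_\ell}_F+\norm{E}_F$. Now $E$ is an orthogonal sub-sum of the full middle band $G:=\widehat{A}_{\ell'}-\widehat{A}_\ell=\sum_{i=\ell+1}^{\ell'}\sigma_i u_i v_i^t$, so $\norm{E}_F\leq\norm{G}_F$. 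The crucial observation is that although $G$ could a priori carry Frobenius mass comparable to $\norm{\widehat{A}}_F$, it is in fact small because $G=(\widehat{A}_{\ell'}-A)+(A-\widehat{A}_\ell)$ is the difference of two good reconstructions of $A$; hence $\norm{G}_F\leq\norm{A-\widehat{A}_{\ell'}}_F+\norm{A-\widehat{A}_\ell}_F$, and combining yields the display. This step is the main obstacle: the naive bound $\norm{E}_F^2\leq(\ell'-\ell)\sigma^2$ loses a factor $1/\sqrt{\mu}$ and is useless, and it is only by routing the middle band through the two reconstructions that one recovers the correct scale.

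Next I would bound the two terms by invoking Theorem~\ref{sigma} with two different threshold parameters for the \emph{same} random matrix $\widehat{A}$, so that the single high-probability event of Theorem~\ref{AM} (which holds for all truncation ranks simultaneously) suffices and no union bound is needed. For $\widehat{A}_\ell=\widehat{A}_{\geq\sigma}$ this is Theorem~\ref{sigma} verbatim, giving $\norm{A-\widehat{A}_\ell}_F\leq\norm{A-A_k}_F+(3\sqrt{\eta}k^{1/4}\mu^{-1/4}+\sqrt{2\mu/p})\norm{A}_F$. For $\widehat{A}_{\ell'}=\widehat{A}_{\geq(1-\kappa)\sigma}$ I would apply Theorem~\ref{sigma} with parameter $\mu'=(1-\kappa)^2\mu$, since then $(1-\kappa)\sigma=\sqrt{\mu'/k}\,\norm{\widehat{A}}_F$; translating $(\mu')^{-1/4}=(1-\kappa)^{-1/2}\mu^{-1/4}$ and $\sqrt{2\mu'/p}=(1-\kappa)\sqrt{2\mu/p}$ yields $\norm{A-\widehat{A}_{\ell'}}_F\leq\norm{A-A_k}_F+\big(3\sqrt{\eta}k^{1/4}\mu^{-1/4}(1-\kappa)^{-1/2}+(1-\kappa)\sqrt{2\mu/p}\big)\norm{A}_F$. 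Adding two copies of the first bound and one of the second reproduces exactly the claimed coefficients: $3$ on $\norm{A-A_k}_F$, $2+(1-\kappa)^{-1/2}$ on $3\sqrt{\eta}k^{1/4}\mu^{-1/4}$, and $2+(1-\kappa)=3-\kappa$ on $\sqrt{2\mu/p}$.

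Finally, for the second statement I would reuse the parameter choices from Theorem~\ref{sigma}: set $\mu=\epsilon^2 p/2$ (so $\sqrt{2\mu/p}=\epsilon$) and choose $\eta$ small enough that $3\sqrt{\eta}k^{1/4}\mu^{-1/4}\leq\epsilon$, taking $\kappa=1/3$ as suggested in the text. Under $\norm{A-A_k}_F\leq\epsilon\norm{A}_F$ every term is then a constant multiple of $\epsilon\norm{A}_F$, and the three coefficients above sum to $3+(2+\sqrt{3/2})+(3-\tfrac13)<9$, giving $\norm{A-\widehat{A}_{\geq\sigma,\kappa}}_F\leq 9\epsilon\norm{A}_F$; the hypothesis $\norm{A}_F\geq 36\sqrt{2}(nk)^{1/2}/\epsilon^3$ is precisely what guarantees $p\leq 1$, exactly as in Theorem~\ref{sigma}.
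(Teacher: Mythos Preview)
Your proposal is correct and follows essentially the same route as the paper: the paper also derives the key reduction $\norm{A-\widehat{A}_{\geq\sigma,\kappa}}_F\leq 2\norm{A-\widehat{A}_{\geq\sigma}}_F+\norm{A-\widehat{A}_{\geq(1-\kappa)\sigma}}_F$ via the triangle inequality (bounding the middle-band piece by $\norm{\widehat{A}_{\geq\sigma}-\widehat{A}_{\geq(1-\kappa)\sigma}}_F$ and then routing through $A$), then applies Theorem~\ref{sigma} twice with parameters $\mu$ and $(1-\kappa)^2\mu$, and finishes with the same choices $\kappa=1/3$, $\mu=\epsilon^2 p/2$. Your explicit remark that a single high-probability event covers both applications of Theorem~\ref{sigma} (since Theorem~\ref{AM} is uniform in the truncation rank) is a nice clarification the paper leaves implicit.
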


\begin{proof}
We have
\begin{eqnarray*}
\norm{A-\widehat{A}_{\geq \sigma,\kappa}}_{F} & \leq & \norm{A-\widehat{A}_{\geq \sigma}}_{F} + \norm{\widehat{A}_{\geq \sigma}-\widehat{A}_{\geq \sigma,\kappa}}_{F} \\
& \leq & 
\norm{A-\widehat{A}_{\geq \sigma}}_{F} + \norm{\widehat{A}_{\geq \sigma}-\widehat{A}_{\geq (1-\kappa)\sigma}}_{F}\\
& \leq & \norm{A-\widehat{A}_{\geq \sigma}}_{F} + \norm{A- \widehat{A}_{\geq \sigma}}_F+ \norm{A- \widehat{A}_{\geq (1-\kappa)\sigma}}_{F}\\
& \leq & 2 \norm{A-\widehat{A}_{\geq \sigma}}_{F} + \norm{A- \widehat{A}_{\geq (1-\kappa)\sigma}}_{F}.\\
\end{eqnarray*}
We use Theorem \ref{sigma} to bound the first term as
\[
\norm{A-\widehat{A}_{\geq \sigma}}_{F}  \leq \norm{A-A_k}_F + (3\sqrt{ \eta} k^{1/4} \mu^{-1/4}+\sqrt{2\mu/p}) \norm{A}_F
\]
For the second term, we can reapply Theorem \ref{sigma} where now we need to rename $\mu$ as $(1-\kappa)^2\mu$ and have
\[
\norm{A-\widehat{A}_{\geq (1-\kappa)\sigma}}_{F}  \leq \norm{A-A_k}_F + (3\sqrt{ \eta} k^{1/4} (1-\kappa)^{-1/2}\mu^{-1/4}+(1-\kappa)\sqrt{2\mu/p}) \norm{A}_F.
\]
Overall we have
\[
\norm{A-\widehat{A}_{\geq \sigma,\kappa}}_{F}  \leq 3\norm{A-A_k}_F + 
\left( 3\sqrt{ \eta} k^{1/4} \mu^{-1/4}(2+(1-\kappa)^{-1/2})+(3-\kappa)\sqrt{2\mu/p}\right) \norm{A}_F.
\]
Let $\norm{A-A_k}_F \leq \epsilon \norm{A}_F$, for some $\epsilon \geq 0$.
We choose $\kappa=1/3$, $\mu = \epsilon^{2} p/2$ and we can select any $\eta \leq \frac{2n^{1/4}  \epsilon^{3/2}} { 3 (2k)^{1/4} \norm{A}_{F}^{1/2}}$ to have
\begin{equation}\label{9epsilon}
\norm{A-\widehat{A}_{\geq \sigma,\kappa}}_{F} \leq 3\epsilon \norm{A}_F + \left(2\epsilon+\frac{\epsilon}{\sqrt{1-\kappa}}+(3-\kappa)\epsilon\right) \norm{A}_F \leq 9 \epsilon \norm{A}_F.
\end{equation}
As in theorem \ref{sigma}, the sampling probability must be at least $p \geq \frac{ 36\sqrt{2} (nk)^{1/2}} { \norm{A}_{F} \epsilon^{3}}$.
\end{proof}

We have shown that the task of providing good recommendations for a user $i$ reduces to being able to sample from the $i$-th row of the matrix $\widehat{T}_{\geq \sigma,\kappa}$, in other words sample from the projection of the $i$-th row of $\widehat{T}$ onto the space spanned by all row singular vectors with singular values higher than $\sigma$ and possibly some more row singular vectors with singular values in the interval $[(1-\kappa)\sigma, \sigma)$. 

In the following section, we show a quantum procedure, such that given a vector (e.g. the $i$-th row of $\widehat{T}$), a matrix (e.g. the matrix $\widehat{T}$), and parameters $\sigma$ and $\kappa$, outputs the quantum state $\ket{(\widehat{T}_{\geq \sigma,\kappa})_i}$, which allows one to sample from this row by measuring in the computational basis. The algorithm runs in time polylogarihmic in the matrix dimensions and polynomial in $k$, since it depends inverse polynomially in $\sigma$, which in our case is inverse polynomial in $k$. 


\section{Quantum projections in polylogarithmic time}

The main quantum primitive required for the recommendation system is a quantum projection algorithm that 
runs in time polylogarithmic in the matrix dimensions. 

\subsection{The data structure}
The input to the quantum procedure is a vector $x \in \R^n$ and a matrix $A \in \R^{m\times n}$. 
We assume that the input is stored in a classical data structure such that an algorithm that has quantum access to the data structure can create the quantum state $\ket{x}$ corresponding to the vector $x$ and the quantum states $\ket{A_i}$ corresponding to each row $A_i$ of the matrix $A$, in time $\mbox{polylog}(mn)$. 

It is in fact possible to design a data structure for a matrix $A$ that supports the efficient construction of the quantum states $\ket{A_i}$. Moreover, we can ensure that the size of the data structure is optimal (up to polylogarithmic factors), and the data entry time, i.e. the time to store a new entry $(i, j, A_{ij})$ that arrives in the system is just $\mbox{polylog}(mn)$. Note that just writing down the entry takes logarithmic time. 

\begin{theorem}\label{datastr}
Let $A\in \R^{m\times n}$ be a matrix. Entries $(i, j, A_{ij})$ arrive in the system in an arbitrary order and $w$ denotes the number of entries that have already arrived in the system. There exists a data structure to store the entries of $A$ with the following properties:
\begin{enumerate}[i.]
\item The size of the data structure is $O(w\cdot\log^2(mn))$.
\item The time to store a new entry $(i,j,A_{ij})$ is $O(\log^2(mn))$.
\item A quantum algorithm that has quantum access to the data structure can 
perform the mapping $\widetilde{U}:\ket{i}\ket{0} \to \ket{i} \ket{A_i}$, for $i \in [m]$,
corresponding to the rows of the matrix currently stored in memory and the mapping $\widetilde{V}:\ket{0}\ket{j} \to \ket{\widetilde{A}} \ket{j}$, for $j \in [n]$,
where $\widetilde{A} \in \R^{m}$ has entries $\widetilde{A}_{i} = \norm{A_{i}} $ in time $\emph{polylog}(mn)$. 
\end{enumerate}
\end{theorem}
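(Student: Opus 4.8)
The plan is to build, for each row of $A$, a binary tree that stores partial sums of squared entries, plus one additional tree over the row norms; the maps $\widetilde{U}$ and $\widetilde{V}$ are then realized by a chain of controlled rotations that walk these trees from root to leaf. First I would describe the structure. For each nonempty row $i$, maintain a binary tree $B_i$ with up to $n$ leaves, where leaf $j$ stores $A_{ij}^2$ together with the sign of $A_{ij}$, and every internal node stores the sum of the values of the leaves in its subtree; in particular the root of $B_i$ holds $\norm{A_i}^2$. Only nodes lying on a root-to-leaf path of some nonzero entry are instantiated, so $B_i$ contributes $O(\log n)$ nodes per stored entry. In addition, maintain a tree $B$ over the rows whose leaf $i$ stores $\norm{A_i}^2$ (the root value of $B_i$) and whose internal nodes again hold subtree sums, so its root holds $\norm{A}_F^2$. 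All nodes are kept in an addressable dictionary keyed by a (row index, path) pair, so that a given node can be located in $O(\log(mn))$ time.

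Items i and ii then follow from counting. For the size, each arriving entry adds $O(\log n)$ new nodes to its row tree and updates one $O(\log m)$-length path of $B$, and each node holds a number of $O(\log(mn))$ bits; hence the total number of nodes is $O(w\log(mn))$ and the total size is $O(w\log^2(mn))$. For the update time, inserting $(i,j,A_{ij})$ requires locating or creating the leaf in $B_i$ and adding $A_{ij}^2$ to all of its $O(\log n)$ ancestors, then updating the corresponding leaf of $B$ and its $O(\log m)$ ancestors, with each arithmetic operation acting on $O(\log(mn))$-bit numbers, for a total of $O(\log^2(mn))$.

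For item iii, to implement $\widetilde{U}$ I would, conditioned on the register $\ket{i}$, apply level by level a controlled rotation that at an internal node with children holding subtree sums $s_0$ and $s_1$ rotates a fresh qubit to $\sqrt{s_0/(s_0+s_1)}\ket{0}+\sqrt{s_1/(s_0+s_1)}\ket{1}$. After $O(\log n)$ levels the amplitude on $\ket{j}$ is $\vnorm{A_{ij}}/\norm{A_i}$, and a final phase controlled on the stored sign bit converts this into $A_{ij}/\norm{A_i}$, yielding $\ket{A_i}$. Each rotation angle is computed from values fetched by quantum queries to the dictionary, so the whole map costs $\text{polylog}(mn)$. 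The map $\widetilde{V}$ is the same construction run on the row tree $B$: the identical chain of rotations puts amplitude $\norm{A_i}/\norm{A}_F$ on $\ket{i}$, i.e. prepares $\ket{\widetilde{A}}$, with the register $\ket{j}$ acting only as a spectator.

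The main obstacle is not the combinatorics but the interaction of finite precision with state preparation: the stored partial sums and the derived rotation angles are represented with only $O(\log(mn))$ bits, so I would need to verify that this precision suffices to prepare $\ket{A_i}$ and $\ket{\widetilde{A}}$ to within the inverse-polynomial error that the downstream phase-estimation and projection steps tolerate. A secondary point, which I would state carefully to make the controlled rotations well defined, is fixing the quantum-access (QRAM-style) model so that tree nodes can be queried in superposition over $i$ and over the current path; once the nodes are addressable through the dictionary this is routine.
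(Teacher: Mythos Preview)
Your proposal is correct and follows essentially the same approach as the paper: binary trees $B_i$ per row storing squared entries at the leaves and subtree sums at internal nodes, plus one further tree over the row norms for $\widetilde{V}$, with the state-preparation maps realized by a root-to-leaf chain of controlled rotations. The only cosmetic differences are that the paper stores each level as an ordered list rather than a dictionary, and it folds the sign into the last-level rotation rather than applying a separate controlled phase; your added remarks on finite precision and the QRAM access model are points the paper leaves implicit.
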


The explicit description of the data structure is given in the appendix. Basically, for each row of the matrix, that we view as a vector in $\R^n$, we store an array of $2n$ values as a full binary tree of $n$ leaves. The leaves hold the individual amplitudes of the vector and each internal node holds the sum of the squares of the amplitudes of the leaves rooted on this node. For each entry added to the tree, we need to update $\log(n)$ nodes in the tree. The same data structure can of course be used for the vector $x$ as well. One need not use a fixed array of size $2n$ for this construction, but only ordered lists of size equal to the entries that have already arrived in the system.Alternative solutions for vector state preparation are possible, another solution based on a modified memory is described in \cite{P14}.

\subsection{Quantum Singular Value Estimation}
The second tool required for the projection algorithm is an efficient quantum algorithm for singular value estimation. In the singular value estimation problem we are given a matrix $A$ such that the vector states corresponding to its row vectors can be prepared efficiently. Given a state $\ket{x}=\sum_{i} \alpha_{i} \ket{v_{i}} $ for an arbitrary vector $x\in \R^{n}$ the task is to estimate the singular values corresponding to each singular vector in coherent superposition. Note that we take the basis $\{v_i\}$ to span the entire space by including singular vectors with singular value 0. 


\begin{theorem} \label{tsve} 
Let $A \in \R^{m \times n}$ be a matrix with singular value decomposition $A= \sum_{i} \sigma_{i} u_{i} v_{i}^{t}$ stored in the data structure in theorem \ref{datastr}. Let $\epsilon>0$ be the precision parameter. There is an algorithm with running time $O(\emph{polylog}(mn)/\epsilon)$ that performs the mapping  $\sum_{i} \alpha_{i} \ket{v_{i}}
\to \sum_{i} \alpha_{i} \ket{v_{i}} \ket{ \overline{\sigma_{i}}}$, where $\overline{\sigma_{i}} \in \sigma_{i} \pm \epsilon  \norm{A}_{F}$ for all $i$ with probability at least $1- 1/\emph{poly}(n)$.  
\end{theorem}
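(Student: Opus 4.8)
The plan is to reduce singular value estimation to phase estimation (\thmref{pest}) applied to a carefully constructed unitary whose eigenphases encode the singular values of $A$. The two maps guaranteed by \thmref{datastr}, namely $\widetilde{U}:\ket{i}\ket{0}\to\ket{i}\ket{A_i}$ and $\widetilde{V}:\ket{0}\ket{j}\to\ket{\widetilde{A}}\ket{j}$, define two isometries from $\R^m$ and $\R^n$ into the product space $\R^m\otimes\R^n$. Write $\mathcal{M}$ for the isometry with columns $\ket{i}\ket{A_i}$ and $\mathcal{N}$ for the isometry with columns $\ket{\widetilde{A}}\ket{j}$; both have orthonormal columns because the first (resp.\ second) register disambiguates them. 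The first thing I would verify is the factorization identity $\mathcal{M}^t\mathcal{N}=A/\norm{A}_F$: its $(i,j)$ entry is $\braket{i}{\widetilde{A}}\cdot\braket{A_i}{j}=\frac{\norm{A_i}}{\norm{A}_F}\cdot\frac{A_{ij}}{\norm{A_i}}=A_{ij}/\norm{A}_F$, using that $\norm{\widetilde{A}}=\norm{A}_F$.

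The core of the argument is the spectral correspondence. For the singular vectors $u_i,v_i$ of $A$ one computes $\braket{\mathcal{M}u_i}{\mathcal{N}v_j}=u_i^t(\mathcal{M}^t\mathcal{N})v_j=\frac{\sigma_i}{\norm{A}_F}\delta_{ij}$, so the two-dimensional subspaces $B_i=\mathrm{span}(\mathcal{M}u_i,\mathcal{N}v_i)$ are mutually orthogonal and satisfy $\mathcal{N}v_i\in B_i$. I would then define the unitary $W=R_{\mathcal{N}}R_{\mathcal{M}}$ as the product of the reflections $R_{\mathcal{M}}=2\mathcal{M}\mathcal{M}^t-I$ and $R_{\mathcal{N}}=2\mathcal{N}\mathcal{N}^t-I$ about the column spaces of the two isometries. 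Crucially, each reflection is implementable in $\mathrm{polylog}(mn)$ time: since the column space of $\mathcal{M}$ is the image under $\widetilde{U}$ of the states $\ket{i}\ket{0}$, we have $R_{\mathcal{M}}=\widetilde{U}\,(I\otimes(2\ket{0}\bra{0}-I))\,\widetilde{U}^{\dagger}$, and symmetrically for $R_{\mathcal{N}}$ with $\widetilde{V}$. On each $B_i$ the line spanned by $\mathcal{M}u_i$ and the line spanned by $\mathcal{N}v_i$ make an angle $\theta_i$ with $\cos\theta_i=\sigma_i/\norm{A}_F$, so by the standard fact that a product of two reflections is a rotation by twice the angle between their mirrors, $W$ restricted to $B_i$ is a rotation by $2\theta_i$ with eigenvalues $e^{\pm 2\iota\theta_i}$.

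With this in place the algorithm is: apply $\mathcal{N}$ (prepend $\ket{0}$ and run $\widetilde{V}$) to the input to obtain $\sum_i\alpha_i\,\mathcal{N}v_i$, which lies in $\bigoplus_i B_i$; run phase estimation of $W$ with precision $2\epsilon$, writing $\ket{\overline{\pm 2\theta_i}}$ into an ancilla; compute $\overline{\sigma_i}=\norm{A}_F\cos(\overline{2\theta_i}/2)$ into the output register, noting that the $+$ and $-$ branches yield the same value; then reverse phase estimation and apply $\widetilde{V}^{\dagger}$ to recover $\sum_i\alpha_i\ket{v_i}\ket{\overline{\sigma_i}}$. For the error, phase estimation guarantees $|\overline{2\theta_i}-2\theta_i|\le 2\epsilon$, hence $|\overline{\theta_i}-\theta_i|\le\epsilon$, and since cosine is $1$-Lipschitz, $|\overline{\sigma_i}-\sigma_i|=\norm{A}_F|\cos\overline{\theta_i}-\cos\theta_i|\le\epsilon\norm{A}_F$. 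The running time is that of phase estimation, $O(T(W)\log(mn)/\epsilon)=O(\mathrm{polylog}(mn)/\epsilon)$, and the success probability is boosted to $1-1/\mathrm{poly}(n)$ by the usual $O(\log n)$-fold repetition.

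I expect the main obstacle to be pinning down the spectral correspondence rigorously: verifying that the $B_i$ are genuinely two-dimensional and mutually orthogonal, that $W$ leaves each $B_i$ invariant and acts as the claimed rotation (including the degenerate cases $\sigma_i=0$, where $\theta_i=\pi/2$, and $\sigma_i=\norm{A}_F$), and that the input state decomposes cleanly onto the $e^{\pm 2\iota\theta_i}$-eigenvectors so that applying $\cos(\cdot/2)$ to the estimated phase is well defined independent of sign. The implementation of the reflections via $\widetilde{U},\widetilde{V}$ and the error and time accounting are then routine.
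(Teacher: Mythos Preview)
Your proposal is correct and essentially identical to the paper's proof: your isometries $\mathcal{M},\mathcal{N}$ are exactly the paper's $P,Q$ (Lemma~\ref{l53}), the factorization $\mathcal{M}^t\mathcal{N}=A/\norm{A}_F$ and the reflection-product unitary $W$ match (up to the inessential ordering $R_{\mathcal{N}}R_{\mathcal{M}}$ versus the paper's $R_P R_Q$, which only swaps $W$ for $W^{-1}$), and the algorithm---embed via the $\R^n$-isometry, run phase estimation with precision $2\epsilon$, convert $\overline{\theta}\mapsto\norm{A}_F\cos(\overline{\theta}/2)$, undo the isometry---is step-for-step the same as Algorithm~\ref{algqsve}. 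Your anticipated obstacle (the two-dimensional invariant subspaces and the rotation angle) is precisely what the paper isolates and verifies in part~(iii) of Lemma~\ref{l53}.
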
 
Here, we present a quantum singular value estimation algorithm, in the same flavor as the quantum walk based algorithm by Childs \cite{C10} for estimating eigenvalues of a matrix, and show that given quantum access to the  data structure from theorem \ref{datastr}, our algorithm runs in time  $O(\text{polylog} (mn)/\epsilon)$. 
A different quantum algorithm for singular value estimation can be based on the work of \cite{LMR13} with running time $O(\text{polylog} (mn)/\epsilon^{3})$, and for which a coherence analysis was shown in \cite{P14}. 

The idea for our singular value estimation algorithm is to find isometries $P \in \R^{mn \times m}$ and $Q \in \R^{mn \times n}$ that can be efficiently applied, and such that $\frac{A}{\norm{A}_{F}}=P^{t}Q$. Using $P$ and $Q$, we define a unitary matrix $W$ acting on $\R^{mn}$, which is also efficiently implementable and such that the row singular vector $v_i$ of $A$ with singular value $\sigma_i$ is mapped to an eigenvector $Qv_i$ of $W$ with eigenvalue $e^{\iota \theta_{i}}$ such that $\cos(\theta_{i}/2)= \sigma_{i}/\norm{A}_{F}$ (note that $\cos(\theta_{i}/2)>0$ as $\theta_{i} \in [-\pi, \pi]$). 
The algorithm consists of the following steps: first, map the input vector $\sum_{i} \alpha_{i} \ket{v_{i}}
$ to $\sum_{i} \alpha_{i} \ket{Qv_{i}}$ by applying $Q$; then, use phase estimation as in theorem \ref{pest} with 
unitary $W$ to compute an estimate of the eigenvalues $\theta_{i}$ and hence of the singular values $\sigma_{i} = \norm{A}_F \cos(\theta_{i}/2)$; and finally undo $Q$ to recover the state $\sum_{i} \alpha_{i} \ket{v_{i}}\ket{\sigma_i}$. This procedure is described in algorithm \ref{algqsve}. 

It remains to show how to construct the mappings $P,Q$ and the unitary $W$ that satisfy all the properties mentioned above that are required for the quantum singular value estimation algorithm. 

\begin{lemma} \label{l53} 
Let $A \in \R^{m \times n}$ be a matrix with singular value decomposition $A= \sum_{i} \sigma_{i} u_{i} v_{i}^{t}$ stored in the data structure in theorem \ref{datastr}. Then, there exist matrices $P \in \R^{mn \times m}, Q \in \R^{mn \times n}$ such that
\begin{enumerate}[i.]
\item The matrices $P,Q$ are a factorization of $A$, i.e. $\frac{A}{\norm{A}_{F}} = P^{t} Q$. Moreover, $P^{t}P=I_{m}$, $Q^{t}Q=I_{n}$, and multiplication by $P,Q$, i.e. the mappings $\ket{y} \to \ket{Py}$ and $\ket{x} \to \ket{Qx}$ can be performed in time $O(\emph{polylog} (mn))$.
\item The unitary $W=U \cdot V$, where $U,V$ are the reflections $U= 2PP^{t} - I_{mn}$ and $V= 2QQ^{t} - I_{mn}$ can be implemented in time $O(\emph{polylog} (mn))$.
\item The isometry $Q : \R^n \to \R^{mn}$ maps a row singular vector $v_i$ of $A$ with singular value $\sigma_i$ to an eigenvector $Qv_i$ of $W$ with eigenvalue $e^{\iota \theta_{i}}$ such that $\cos(\theta_{i}/2)= \sigma_{i}/\norm{A}_{F}$. 
\end{enumerate}
\end{lemma}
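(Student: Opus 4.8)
The plan is to exhibit $P$ and $Q$ explicitly as the isometries whose columns are exactly the vector states produced by the data structure of Theorem~\ref{datastr}, and then read off the three properties one by one. Concretely, I would define $P\in\R^{mn\times m}$ by its columns $Pe_i=\ket{i}\ket{A_i}$ and $Q\in\R^{mn\times n}$ by $Qe_j=\ket{\widetilde{A}}\ket{j}$, where $\ket{A_i}$ is the normalized $i$-th row and $\ket{\widetilde{A}}=\frac{1}{\norm{A}_F}\sum_i\norm{A_i}\ket{i}$ (note $\norm{\widetilde{A}}=\norm{A}_F$). The columns of $P$ are orthonormal because the first register $\ket{i}$ already distinguishes them, giving $P^tP=I_m$; likewise the second register $\ket{j}$ gives $Q^tQ=I_n$. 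For the factorization I would compute $\langle Pe_i,Qe_j\rangle=\frac{\norm{A_i}}{\norm{A}_F}\cdot\frac{A_{ij}}{\norm{A_i}}=\frac{A_{ij}}{\norm{A}_F}$, establishing $P^tQ=A/\norm{A}_F$. Finally, the maps $\ket{y}\to\ket{Py}$ and $\ket{x}\to\ket{Qx}$ are by linearity exactly $\widetilde{U}$ and $\widetilde{V}$ of Theorem~\ref{datastr} applied to $\ket{y}\ket{0}$ and $\ket{0}\ket{x}$, so they run in $\text{polylog}(mn)$ time. This settles property i.

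For property ii I would observe that $PP^t$ and $QQ^t$ are the projectors onto the column spaces of $P$ and $Q$, and that these are conjugates of trivial projectors: $PP^t=\widetilde{U}\,(I_m\otimes\ket{0}\bra{0})\,\widetilde{U}^\dagger$ and $QQ^t=\widetilde{V}\,(\ket{0}\bra{0}\otimes I_n)\,\widetilde{V}^\dagger$. Hence $U=2PP^t-I_{mn}=\widetilde{U}\big(I_m\otimes(2\ket{0}\bra{0}-I_n)\big)\widetilde{U}^\dagger$, and symmetrically for $V$. Each reflection is thus one call to $\widetilde{U}$ (resp. $\widetilde{V}$), its inverse, and an elementary reflection that flips the sign of every basis state whose relevant register is nonzero; composing gives $W=UV$ in $\text{polylog}(mn)$ time.

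The heart is property iii. Fix a singular triple $(\sigma_i,u_i,v_i)$ and set $c_i=\sigma_i/\norm{A}_F$. From $P^tQ=A/\norm{A}_F$ I obtain the two identities $P^t(Qv_i)=c_iu_i$ and $Q^t(Pu_i)=c_iv_i$. These give $V(Qv_i)=Qv_i$ and $U(Pu_i)=Pu_i$ (each vector lies in the relevant column space), while $U(Qv_i)=2c_iPu_i-Qv_i$ and $V(Pu_i)=2c_iQv_i-Pu_i$. Consequently $\mathcal{S}_i=\text{span}\{Qv_i,Pu_i\}$ is invariant under $U$, $V$, and hence $W$. Since $\langle Qv_i,Pu_i\rangle=c_i$, the two unit vectors subtend an angle $\alpha_i$ with $\cos\alpha_i=c_i$; on $\mathcal{S}_i$ the maps $U,V$ are reflections about $Pu_i$ and $Qv_i$, so their product $W$ is a rotation by $2\alpha_i$ with eigenvalues $e^{\pm 2\iota\alpha_i}$. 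Writing the eigenphase as $\theta_i$, this yields $\cos(\theta_i/2)=\cos\alpha_i=\sigma_i/\norm{A}_F$; and as $c_i\in[0,1]$ we have $\alpha_i\in[0,\pi/2]$, so $\theta_i\in[-\pi,\pi]$ and $\cos(\theta_i/2)\ge 0$, matching the claim.

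The step I expect to need the most care is the exact phrasing of property iii, since strictly speaking $Qv_i$ is a genuine eigenvector of $W$ only in the degenerate cases: when $\sigma_i=0$ (then $Qv_i\perp Pu_i$ and $W(Qv_i)=-Qv_i=e^{\iota\pi}Qv_i$) and when $\sigma_i=\norm{A}_F$ (then $Qv_i=Pu_i$ is fixed). For $0<\sigma_i<\norm{A}_F$ the real vector $Qv_i$ is instead the balanced superposition of the conjugate eigenpair with eigenvalues $e^{\pm 2\iota\alpha_i}$. I would therefore record property iii in the form actually used downstream: $Qv_i$ lies in the two-dimensional $W$-invariant block $\mathcal{S}_i$ on which $W$ acts as rotation by $2\alpha_i$, and because $\cos$ is even, phase estimation of $W$ started from $Qv_i$ returns the same value $\overline{\sigma_i}=\norm{A}_F\cos(\overline{\theta_i}/2)$ in both branches, so writing the singular-value register and uncomputing the phase leaves $Qv_i$ intact. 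To license coherent use over a general input $\sum_i\alpha_i\ket{v_i}$, I would finally check that the blocks are mutually orthogonal, $\langle Qv_i,Qv_{i'}\rangle=\delta_{ii'}$, $\langle Pu_i,Pu_{i'}\rangle=\delta_{ii'}$, and $\langle Qv_i,Pu_{i'}\rangle=c_{i'}\delta_{ii'}$, so that $W$ is block diagonal and each singular direction is processed independently.
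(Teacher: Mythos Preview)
Your proposal is correct and follows essentially the same approach as the paper: the same explicit columns for $P$ and $Q$, the same inner-product computation for the factorization $P^tQ=A/\norm{A}_F$, the same conjugation implementation of the reflections $U,V$ via $\widetilde{U},\widetilde{V}$, and the same two-dimensional invariant subspaces $\text{span}\{Pu_i,Qv_i\}$ for part~(iii). Your extra care in (iii)---observing that $Qv_i$ is generically not a genuine eigenvector of the real orthogonal $W$ but rather the balanced superposition of a conjugate eigenpair, and verifying block-orthogonality for coherent use---is a valid sharpening of a point the paper states loosely; the paper simply asserts that $\mathcal{W}_i=\text{Span}(Pu_i,Qv_i)$ is an eigenspace on which $W$ acts as a rotation by $\pm\theta_i$ without flagging this subtlety.
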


\begin{proof} 
Let $P \in \R^{mn \times m}$ be a matrix with column vectors $e_{i}  \otimes \frac{A_{i}}{\norm{A_{i}}}$ for $i \in [m]$. In quantum notation multiplication by $P$ can be expressed as 
\[
\ket{Pe_{i}} = \ket{i, A_{i}} = \frac{1}{\norm{A_i}} \sum_{j\in [n]} A_{ij} \ket{i,j}, \quad \mbox{ for } i \in [m]. 
\]
Let $\widetilde{A} \in \R^{m}$ be the vector of Frobenius norms of the rows of the matrix $A$, that is $\widetilde{A}_{i} = \norm{A_{i}}$ for $i \in [m]$. Let $Q \in \R^{mn \times n}$ be a matrix with column vectors $\frac{\widetilde{A}}{\norm{A}_{F}}  \otimes e_{j}$ for $j \in [n]$. In quantum notation multiplication by $Q$ can be expressed as 
\[
\ket{Qe_{j}} = \ket{\widetilde{A}, j} = \frac{1}{\norm{A}_F}  \sum_{i \in [m]}  \norm{A_i} \ket{i,j}, \quad \mbox{ for } j \in [n]. 
\]
The factorization $A=P^{t}Q$ follows easily by expressing the matrix product in quantum notation, 
\[ 
(P^{t} Q)_{ij} = \braket{ i, A_{i}} { \widetilde{A}, j } = \frac{ \norm{A_{i}} } {\norm{A}_{F}} \frac{ A_{ij} } { \norm{A_{i}} }   = \frac{A_{ij} } {\norm{A}_{F} }.
\]
The columns of $P, Q$ are orthonormal by definition so $P^{t} P= I_{m}$ and $Q^{t} Q =I_{n}$. Multiplication by $P$
and $Q$  can be implemented in time $\text{polylog}(mn)$ using quantum access to the data structure from theorem \ref{datastr}, 
\all{ 
\ket{y} &\to \ket{y, 0^{\lceil \log n \rceil } } = \sum_{i \in [m]} y_{i}\ket{i, 0^{\lceil \log n \rceil }  } \xrightarrow{\widetilde{U}}  \sum_{i \in [m]} y_{i}\ket{i, A_{i} } = \ket{Py} \nl 
\ket{x} &\to \ket{0^{\lceil \log m \rceil }, x } = \sum_{j \in [n]} x_{j}\ket{0^{\lceil \log m \rceil }, j }  \xrightarrow{\widetilde{V}}  \sum_{j \in [n]} x_{j}\ket{\widetilde{A}, j } = \ket{Qx}.
} {ds1} 
To show (ii), note that the unitary $U$ is a reflection in $Col(P)$ and can be implemented as $U=\widetilde{U}R_{1}\widetilde{U}^{-1}$ where 
$\widetilde{U}$ is the unitary in first line of equation \eqref{ds1} and $R_{1}$ is the reflection in the space $\ket{y, 0^{\lceil \log n \rceil } }$ for $y \in \R^m$.  
It can be implemented as a reflection conditioned on the second register being in state $\ket{0^{\lceil \log n \rceil }}$. The unitary 
$V$ is a reflection in $Col(Q)$ and can be implemented analogously as $V=\widetilde{V}R_{0}\widetilde{V}^{-1}$ where 
$\widehat{V}$ is the unitary in the second line of equation \eqref{ds1} and $R_{0}$ is the reflection in the space $\ket{0^{\lceil \log m \rceil }, x }$ 
for $x \in \R^{n}$.

It remains to show that $Qv_{i}$ is an eigenvector for $W$ with eigenvalue $e^{\iota \theta_{i}}$ such that 
$\cos(\theta_{i}/2) = \sigma_{i}/\norm{A}_{F}$. For every pair of singular vectors $(u_{i}, v_{i})$ of $A$, we define the two dimensional subspaces $\mathcal{W}_{i}=Span(Pu_{i}, Qv_{i})$ 
and let $\theta_{i}/2 \in [-\pi/2, \pi/2]$ be the angle between $Pu_{i}$ 
and $\pm Qv_{i}$. Note that $\mathcal{W}_{i}$ is an eigenspace for $W$ which acts on it as a rotation by $\pm \theta_{i}$,  since $W$ is a reflection in the column space of $Q$ followed by a reflection in the column space of $P$. Moreover, the relation $\cos(\theta_{i}/2) = \sigma_{i}/\norm{A}_{F}$ 
is a consequence of the factorization in lemma \ref{l53}, since we have
\al{ 
PP^{t} Q v_{i} = \frac{PA v_{i}}{ \norm{A}_{F}} = \frac{\sigma_{i}}{\norm{A}_{F} }  Pu_{i} \quad \mbox{and} \quad
QQ^{t} P u_{i} = \frac{QA^{t} u_{i}}{ \norm{A}_{F}} = \frac{\sigma_{i}}{\norm{A}_{F} }  Qv_{i}.
}

\end{proof} 
Using the primitives from the preceding lemma, we next describe the singular value estimation algorithm and analyze it to prove 
theorem \ref{tsve}.

\begin{algorithm}[H]
\caption{Quantum singular value estimation} \label{algqsve}
\begin{algorithmic}[1]
\REQUIRE $A \in \R^{m\times n}$, $x \in \R^{n}$ in the data structure from theorem \ref{datastr}, precision parameter $\epsilon>0$.  \\
\begin{enumerate} 
\item Create $\ket{x} = \sum_{i} \alpha_{i} \ket{v_{i}}$. \\
\item Append a first register $\ket{0^{\lceil \log m \rceil }}$ and create the state $\ket{Qx} = \sum_{i} \alpha_{i} \ket{Qv_{i}}$ as in eq. \eqref{ds1}. \\
\item Perform phase estimation 
with precision parameter $2\epsilon >0$
on the input $\ket{Qx}$ for the unitary $W=U\cdot V$ where 
$U,V$ are the unitaries in lemma \ref{l53} and obtain $\sum_{i} \alpha_{i} \ket{Qv_{i}, \overline{\theta_{i}} } $.  
\item Compute $\overline{\sigma_{i}} = \cos(\overline{\theta_{i}}/2) \norm{A}_{F}$ where 
 $\overline{\theta_{i}}$ is the estimate from phase estimation, and uncompute the 
output of the phase estimation. \nl 
\item Apply the inverse of the transformation in step $2$ to obtain $\sum_{i} \alpha_{i} \ket{v_{i} } \ket{\overline{\sigma_{i}}}$. \nl 
 \end{enumerate} 
\end{algorithmic}
\end{algorithm}

\paragraph{Analysis}
The phase estimation procedure \cite{K95} with unitary $W$ and precision parameter $\epsilon$ on input $\ket{Qv_{i}}$ produces an estimate 
such that $|\overline{\theta_{i}} - \theta_{i} | \leq 2\epsilon$. The estimate for the singular value is $\overline{\sigma_{i}}= \cos(\overline{\theta_{i}}/2)\norm{A}_{F}$. The error 
in estimating $\sigma_{i}= \cos(\theta_{i}/2)\norm{A}_{F}$ can be bounded as follows, 
\al{ 
|\overline{\sigma_{i} } - \sigma_{i} | =  | \cos(\theta_{i}/2) - \cos(\overline{\theta_{i}}/2) | \norm{A}_{F}\leq \sin(\phi) \frac{|\overline{\theta_{i}} - \theta_{i}|}{2}   \norm{A}_{F}  \leq    \epsilon \norm{A}_{F}
} 
where $\phi \in [\theta_{i}/2 - \epsilon, \theta_{i}/2 + \epsilon]$. 
Algorithm \ref{algqsve} therefore produces an additive error $\epsilon \norm{A}_{F}$ estimate of the singular values, the running time is $O(\text{polylog}(mn)/\epsilon)$ by theorem \ref{pest} as the unitary $W$ is implemented in time $O(\text{polylog}(mn))$ by lemma 
\ref{l53}. This concludes the proof of theorem \ref{tsve}.

One can define an algorithm for singular value estimation with input $\ket{y}=\sum_{i} \beta_{i} \ket{u_{i}}$. where $u_i$ are the column singular vectors, by using the operator $P$ from lemma \ref{l53} instead of $Q$ in algorithm \ref{algqsve}. The correctness follows from the same argument as above.


\subsection{Quantum projection with threshold} \label{qp} 
Let $A = \sum_{i} \sigma_i u_i v_i^t$. We recall that $A_{\geq \sigma} = \sum_{\sigma_{i} \geq \sigma} \sigma_i u_i v_i^t$ is the projection of the matrix $A$ onto the space spanned by the singular vectors whose singular values are bigger than $\sigma$. Also, $A_{\geq \sigma,\kappa}$ is the projection of the matrix $A$ onto the space spanned by the union of the singular vectors whose corresponding singular values is bigger than $\sigma$ and some subset of singular vectors whose corresponding singular values are in the interval $[(1-\kappa)\sigma,\sigma)$.
 
Algorithm \ref{projection} presents a quantum algorithm that given access to vector state $x$, a matrix $A$ and parameters $\sigma,\kappa$, outputs the state $\ket{A^+_{\geq \sigma,\kappa} A_{\geq \sigma,\kappa}  x}$, namely
the projection of $x$ onto the subspace spanned by the union of the row singular vectors whose corresponding singular values are bigger than $\sigma$ and some subset of row singular vectors whose corresponding singular values are in the interval $[(1-\kappa)\sigma,\sigma)$. 

For simplicity, we present the algorithm without a stopping condition and we will compute the expected running time. By stopping the algorithm after a number of iterations which is $\log(n)$ times more than the expected one, we can easily construct an algorithm with worst-case running time guarantees and whose correctness probability has only decreased by a factor of $(1-1/\text{poly}(n))$.

Let $\{v_i\}$ denote an orthonormal basis for $\R^n$ that includes all row singular vectors of the matrix $A$. 
We think of $\kappa$ as a constant, for example $1/3$. 

\begin{algorithm}[H]\label{projection}
\caption{Quantum projection with threshold} 
\begin{algorithmic}[1]
\REQUIRE $A \in \R^{m\times n}$, $x \in \R^{n}$ in the data structure from Theorem \ref{datastr}; parameters $\sigma,\kappa >0$. \vspace{0.2cm}

\begin{enumerate}
\item Create $\ket{x} = \sum_i \alpha_i \ket{v_i}$. \nl 
\item Apply the singular value estimation on $\ket{x}$ with precision $\epsilon \hat{=} \frac{\kappa}{2} \frac{ \sigma}{\norm{A}_F}$ to obtain the state
\[
\sum_i \alpha_i \ket{v_i}\ket{\overline{\sigma}_i}
\]
\item Apply on a second new register the unitary $V$ that maps $\ket{t}\ket{0} \mapsto \ket{t}\ket{1}$ if $t < \sigma - \frac{\kappa}{2} \sigma$ and $\ket{t}\ket{0} \mapsto \ket{t}\ket{0}$ otherwise, to get the state
\[
\sum_{i \in S} \alpha_i \ket{v_i}\ket{\overline{\sigma}_i}\ket{0} + \sum_{i \in \overline{S}} \alpha_i \ket{v_i}\ket{\overline{\sigma}_i}\ket{1},
\]
where $S$ is the union of all $i$'s such that $\sigma_i \geq \sigma$ and some $i$'s with $\sigma_i \in [(1-\kappa)\sigma,\sigma)$.
\item Apply the singular value estimation on the above state to erase the second register 
\[
\sum_{i \in S} \alpha_i \ket{v_i}\ket{0} + \sum_{i \in \overline{S}} \alpha_i \ket{v_i}\ket{1} = 
\beta
\ket{A^+_{\geq \sigma,\kappa} A_{\geq \sigma,\kappa} x} \ket{0} + 
\sqrt{1-|\beta|^2}
\ket{A^+_{\geq \sigma,\kappa} A_{\geq \sigma,\kappa} x}^\bot \ket{1},
\]
with $ \beta = \frac{|| A^+_{\geq \sigma,\kappa} A_{\geq \sigma,\kappa} x||}{\norm{x} }$.
\item Measure the second register in the standard basis. If the outcome is $\ket{0}$, output the first register and exit. Otherwise repeat step 1.
\end{enumerate}
\end{algorithmic}
\end{algorithm}

For the running time, note that the singular value estimation takes time $O(\text{polylog}(mn)/\epsilon)$, while the probability we obtain $\ket{0}$ in step 5 is 
$\frac{||A^+_{\geq\sigma,\kappa} A_{\geq\sigma,\kappa} x||^{2} }{\norm{x} ^2} \geq \frac{||A^+_{\geq\sigma} A_{\geq\sigma} x||^{2}}{\norm{x} ^2}$.

\begin{theorem} \label{aqproj} 
Algorithm \ref{projection} outputs $\ket{A^+_{\geq\sigma,\kappa} A_{\geq\sigma,\kappa} x}$ with probability at least $1 - 1/\emph{poly}(n)$ and in expected time $O(\frac{ \emph{polylog}(mn) \norm{A}_F \norm{x} ^2}{\sigma ||A_{\geq\sigma} A_{\geq\sigma} ^{+} x||^2})$. 
\end{theorem}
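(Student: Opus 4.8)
The plan is to split the argument into two independent parts: first, establishing that conditioned on measuring $\ket{0}$ in step 5 the first register holds exactly $\ket{A^+_{\geq\sigma,\kappa}A_{\geq\sigma,\kappa}x}$; and second, bounding the per-iteration success probability together with the per-iteration cost to obtain the expected running time. Throughout I would write $\ket{x}=\sum_i\alpha_i\ket{v_i}$ in the orthonormal basis $\{v_i\}$ of row singular vectors (including those with singular value $0$), so that $\alpha_i=\langle v_i,x\rangle/\norm{x}$ and $\sum_i|\alpha_i|^2=1$.

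For correctness, the crux is the interplay between the precision of the singular value estimation of Theorem~\ref{tsve} and the threshold used in step 3. With precision $\epsilon=\frac{\kappa}{2}\frac{\sigma}{\norm{A}_F}$ the estimate satisfies $\overline{\sigma}_i\in\sigma_i\pm\frac{\kappa}{2}\sigma$ for every $i$ with probability $1-1/\text{poly}(n)$. Comparing against the cutoff $(1-\kappa/2)\sigma$ applied by the unitary $V$, I would check the three regimes: if $\sigma_i\geq\sigma$ then $\overline{\sigma}_i\geq(1-\kappa/2)\sigma$, so $i$ is flagged $\ket{0}$ and lies in $S$; if $\sigma_i<(1-\kappa)\sigma$ then $\overline{\sigma}_i<(1-\kappa/2)\sigma$, so $i$ is flagged $\ket{1}$ and lies in $\overline{S}$; and only for $\sigma_i\in[(1-\kappa)\sigma,\sigma)$ may the classification go either way. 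This is precisely the family $A_{\geq\sigma,\kappa}$, so whatever set $S$ arises, $\text{span}\{v_i:i\in S\}$ is the row space onto which $A^+_{\geq\sigma,\kappa}A_{\geq\sigma,\kappa}$ projects. Since step 4 reapplies the (inverse of the) singular value estimation to uncompute the $\ket{\overline{\sigma}_i}$ register coherently, the $\ket{0}$ branch carries $\sum_{i\in S}\alpha_i\ket{v_i}$, whose normalization is exactly $\ket{A^+_{\geq\sigma,\kappa}A_{\geq\sigma,\kappa}x}$.

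For the running time, the amplitude of the $\ket{0}$ branch is $\beta$ with $|\beta|^2=\sum_{i\in S}|\alpha_i|^2=\norm{A^+_{\geq\sigma,\kappa}A_{\geq\sigma,\kappa}x}^2/\norm{x}^2$, so step 5 succeeds with this probability; treating the repetitions as independent trials, the expected number of iterations is its inverse. Because $S$ always contains every index with $\sigma_i\geq\sigma$, I would lower bound $|\beta|^2\geq\norm{A^+_{\geq\sigma}A_{\geq\sigma}x}^2/\norm{x}^2$ (the squared norm of the projection of $x$ onto the row space of $A_{\geq\sigma}$), making the expected count at most $\norm{x}^2/\norm{A^+_{\geq\sigma}A_{\geq\sigma}x}^2$, a quantity independent of the ambiguous choice of $S$. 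Each iteration performs a constant number of singular value estimations, each costing $O(\text{polylog}(mn)/\epsilon)=O(\text{polylog}(mn)\norm{A}_F/\sigma)$ with $\kappa$ treated as a constant; multiplying by the expected iteration count gives the claimed bound. The overall $1-1/\text{poly}(n)$ success probability then follows by a union bound over the polynomially many singular value estimation calls (after the truncation of the iteration count described before the algorithm), each correct with probability $1-1/\text{poly}(n)$.

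The main obstacle I expect is the coherent uncomputation in step 4 together with the boundary behavior at the cutoff. One must argue that reapplying the estimation unitary erases the estimate register cleanly, which relies on the estimation being a genuine reversible unitary procedure and on the estimates for a fixed $v_i$ being concentrated on a single value, so that the flag set in step 3 is not entangled with a superposition of estimates straddling the threshold. The margin of width $\kappa\sigma/2$ on each side of the cutoff is exactly what buys this separation for the unambiguous singular values, while the definition of $A_{\geq\sigma,\kappa}$ is precisely what renders the residual ambiguity inside the band $[(1-\kappa)\sigma,\sigma)$ harmless.
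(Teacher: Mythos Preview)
Your proposal is correct and follows exactly the approach the paper takes; in fact, the paper does not give a separate proof environment for this theorem but relies on the annotations inside the algorithm description together with the single paragraph preceding the theorem, and you have simply fleshed out those same steps (the three-regime case analysis for the threshold, the lower bound $|\beta|^2\geq\norm{A^+_{\geq\sigma}A_{\geq\sigma}x}^2/\norm{x}^2$, and the cost $O(\text{polylog}(mn)/\epsilon)$ per iteration). Your final paragraph on the uncomputation subtlety is a legitimate concern that the paper addresses only implicitly, by stating phase estimation (Theorem~\ref{pest}) as producing a single definite estimate $\ket{\overline{\theta_j}}$ per eigenvector with probability $1-1/\text{poly}(n)$ after the majority-vote boosting; under that idealization the uncompute in step~4 is exact and your worry does not materialize.
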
 

It is important to notice that the running time of the quantum projection algorithm depends only on the threshold $\sigma$ (which we will take to be of the order $\frac{\norm{A}_F}{\sqrt{k}}$) and not on the condition number of $A$ which may be very large. We will also show in the next section that in the recommendation systems, for most users the ratio $\frac{||A^+_{\geq\sigma} A_{\geq\sigma} x||^{2}}{\norm{x} ^2}$ is constant.  This will conclude the analysis and show that the running time of the quantum recommendation system is polynomial in $k$ and polylogarithmic in the matrix dimensions. 

One could also use amplitude amplification to improve the running time of algorithm \ref{projection}, once a careful error analysis is performed as the reflections are not exact. As we will see that the $\frac{||A^+_{\geq\sigma} A_{\geq\sigma} x||^{2}}{\norm{x} ^2}$ is constant for most users, this will not change asymptotically the running time of the algorithm and hence we omit the analysis.


\section{Quantum recommendation systems}

We have all the necessary ingredients to describe the quantum algorithm that provides good recommendations for a user $i$ and that runs in time polylogarithmic in the dimensions of the preference matrix and polynomial in the rank $k$. As we said, in recommendation systems we assume that for the matrix $T$ we have $\norm{T - {T}_k}_F \leq \epsilon ||T||_F$ for some small approximation parameter $\epsilon$ and small rank $k$ (no more than 100). In the algorithm below, as in the classical recommendation systems, we assume we know $k$ but in fact we just need to have a good estimate for it. 

Note again that we do not put a stopping condition to the algorithm and we compute the expected running time. Again we can turn this into an algorithm with worst-case running time guarantees by stopping after running for $\log(n)$ times more than the expected running time, and the correctness probability has only decreased by a factor of $1-1/\text{poly}(n)$.

\begin{algorithm}[H]
\caption{Quantum recommendation algorithm.} \label{qproj}
\label{lpp}
\begin{algorithmic}[1]
\REQUIRE A subsample matrix $\widehat{T}\in \R^{m\times n}$ (with sampling probability $p$) stored in the data structure from Theorem \ref{datastr} and satisfying the conditions in Theorem \ref{sigmakappa}; a user index $i$. 
\STATE Apply the quantum projection procedure \ref{projection} with the matrix $\widehat{T}$, the vector corresponding to the $i$-th row $\widehat{T}_i$, with $\sigma=\sqrt{\frac{\epsilon^2 p}{2k}}\norm{\widehat{T}}_F $ and 
$\kappa=1/3$. \\
 The algorithm runs in expected time $O(\text{polylog}(mn)\sqrt{k} \norm{\widehat{T}_i}^2 / \sqrt{p} \norm{\widehat{T}^+_{\geq\sigma} \widehat{T}_{\geq\sigma} \widehat{T}_i }^2)$ and returns with probability at least $1-1/\text{poly}(n)$ the state 
$\ket{\widehat{T}^+_{\geq \sigma,\kappa} \widehat{T}_{\geq \sigma,\kappa} \widehat{T}_i}$.
\STATE  Measure the above state in the computational basis to get a product $j$. 
\end{algorithmic}
\end{algorithm}

\subsection{Analysis}

\paragraph{Correctness}
Let us check the correctness of the algorithm. 
Note that 
$\widehat{T}^+_{\geq \sigma,\kappa} \widehat{T}_{\geq \sigma,\kappa} \widehat{T}_i = (\widehat{T}_{\geq \sigma,\kappa})_i$, i.e. the $i$-th row of the matrix $\widehat{T}_{\geq \sigma,\kappa}$. Hence, the quantum projection procedure outputs with probability at least $1-1/\text{poly}(n)$ the state $\ket{(\widehat{T}_{\geq \sigma,\kappa})_i}$, meaning that our quantum recommendation algorithm with high probability outputs a product by sampling the $i$-th row of the matrix $\widehat{T}_{\geq \sigma,\kappa}$.

By Theorem \ref{sigmakappa}, and by setting the parameters appropriately to get equation \eqref{9epsilon}, we have that with probability at least $1- exp(-19 (\log n)^{4})$,
\[
\norm{T-\widehat{T}_{\geq \sigma,\kappa}}_F \leq 9\epsilon \norm{T}_F.
\] 
In this case, we can apply Theorem \ref{rec} with matrix $\widetilde{T}=\widehat{T}_{\geq \sigma,\kappa}$ to show that there exists a subset of users $S' $ of size at least $(1-\delta-\zeta)m$ (for $\delta>0$), such that on average over the users in $S'$, the probability that our quantum algorithm  provides a bad recommendation is 
\[ \Pr_{i \sim \mathcal{U}_{S'}, j \sim (\widehat{T}_{\geq \sigma,\kappa})_i}[(i,j) \text{ bad}] \leq \frac{\left( \frac{9\epsilon(1+9\epsilon)}{1-9\epsilon} \right)^2}{ \left( 1/\sqrt{1+\gamma} - 9\epsilon/\sqrt{\delta}\right)^2  (1-\delta - \zeta)}.
\]

\paragraph{Expected running time}
We prove the following theorem
\begin{theorem}
For at least $(1-\xi)(1-\delta-\zeta)m$ users in the subset $S'$, we have that the expected running time of Algorithm \ref{lpp} is $O(\emph{polylog}(mn)\emph{poly}(k))$.
\end{theorem}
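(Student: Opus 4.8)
The plan is to start from the expected running time of Algorithm~\ref{lpp}, namely
$O\!\left(\mathrm{polylog}(mn)\,\frac{\norm{\widehat T}_F}{\sigma}\cdot\frac{\norm{\widehat T_i}^2}{\norm{\widehat T^+_{\geq\sigma}\widehat T_{\geq\sigma}\widehat T_i}^2}\right)$,
which is exactly Theorem~\ref{aqproj} instantiated with $A=\widehat T$ and $x=\widehat T_i$. Since $\sigma=\sqrt{\mu/k}\,\norm{\widehat T}_F$ with $\mu=\epsilon^2 p/2$, the leading factor simplifies to $\frac{\norm{\widehat T}_F}{\sigma}=\sqrt{k/\mu}=\frac{\sqrt{2k}}{\epsilon\sqrt p}$, which is already $\mathrm{poly}(k)$ once $\epsilon,p$ are treated as constants. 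Everything therefore reduces to bounding the ratio $\norm{\widehat T_i}^2/\norm{(\widehat T_{\geq\sigma})_i}^2$ for a $(1-\xi)(1-\delta-\zeta)$ fraction of the users, where I use the identity $\widehat T^+_{\geq\sigma}\widehat T_{\geq\sigma}\widehat T_i=(\widehat T_{\geq\sigma})_i$, the $i$-th row of $\widehat T_{\geq\sigma}$. The ratio I will control by an upper bound on the numerator and a lower bound on the denominator, each valid for most users.

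For the denominator I would reuse the argument of Theorem~\ref{rec}. By Theorem~\ref{sigma} we have $\norm{T-\widehat T_{\geq\sigma}}_F\leq 3\epsilon\norm T_F$ with probability $1-\exp(-19(\log n)^4)$; applying the Markov step of Theorem~\ref{rec} (its equation~\eqref{frob}) to $\widetilde T=\widehat T_{\geq\sigma}$ together with the typicality assumption~\eqref{assumption} produces, exactly as in equation~\eqref{normS}, a set $S'$ of at least $(1-\delta-\zeta)m$ users for which $\norm{(\widehat T_{\geq\sigma})_i}^2\geq \frac{\norm T_F^2}{m}\bigl(\tfrac{1}{\sqrt{1+\gamma}}-\tfrac{3\epsilon}{\sqrt\delta}\bigr)^2=:c\,\frac{\norm T_F^2}{m}$, with $c$ a positive constant in the reasonable parameter ranges discussed after Theorem~\ref{rec}. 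It is important that I lower bound $\norm{(\widehat T_{\geq\sigma})_i}$ and not $\norm{(\widehat T_{\geq\sigma,\kappa})_i}$, because the success probability of the projection step is lower bounded precisely by the $\widehat T_{\geq\sigma}$ quantity in Theorem~\ref{aqproj}; since $\widehat T_{\geq\sigma,\kappa}$ only adds singular directions, the true running time is then no larger.

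For the numerator I would apply Markov's inequality \emph{inside} $S'$. Since $\sum_{i\in S'}\norm{\widehat T_i}^2\leq\norm{\widehat T}_F^2$, at most a $\xi$ fraction of the users of $S'$ can have $\norm{\widehat T_i}^2>\frac{\norm{\widehat T}_F^2}{\xi|S'|}$, so at least $(1-\xi)|S'|\geq(1-\xi)(1-\delta-\zeta)m$ users satisfy $\norm{\widehat T_i}^2\leq\frac{\norm{\widehat T}_F^2}{\xi|S'|}$. Restricting to this intersection and combining with the denominator bound gives, for each such user, $\frac{\norm{\widehat T_i}^2}{\norm{(\widehat T_{\geq\sigma})_i}^2}\leq \frac{\norm{\widehat T}_F^2/(\xi|S'|)}{c\,\norm T_F^2/m}\leq\frac{1}{c\,\xi(1-\delta-\zeta)}\cdot\frac{\norm{\widehat T}_F^2}{\norm T_F^2}$. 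Finally, the Chernoff bound already established in the proof of Theorem~\ref{sigma} gives $\norm{\widehat T}_F^2\leq 2\norm T_F^2/p$ with all but exponentially small probability, so the ratio is at most $\frac{2}{c\,\xi(1-\delta-\zeta)\,p}$.

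Putting the pieces together, the expected running time on these $(1-\xi)(1-\delta-\zeta)m$ users is $O\!\bigl(\mathrm{polylog}(mn)\cdot\frac{\sqrt{2k}}{\epsilon\sqrt p}\cdot\frac{2}{c\,\xi(1-\delta-\zeta)\,p}\bigr)=O\!\bigl(\mathrm{polylog}(mn)\,\frac{\sqrt k}{\epsilon\,p^{3/2}}\bigr)$ up to the constant $\frac{1}{c\xi(1-\delta-\zeta)}$, which is $O(\mathrm{polylog}(mn)\,\mathrm{poly}(k))$ once $\epsilon,p,\xi,\delta,\zeta,\gamma$ are treated as constants, as they are throughout the paper. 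The main obstacle I anticipate is the bookkeeping that forces the two ``most users'' events to hold \emph{simultaneously} on the claimed $(1-\xi)(1-\delta-\zeta)m$ fraction: this is why the numerator Markov step must be performed within $S'$ rather than over all of $[m]$, and why the lower bound must be on the $\widehat T_{\geq\sigma}$ projection (matching the success probability of Theorem~\ref{aqproj}) rather than on the larger $\widehat T_{\geq\sigma,\kappa}$ one. The underlying probabilistic facts, that $\norm{T-\widehat T_{\geq\sigma}}_F$ is small and that $\norm{\widehat T}_F^2\leq 2\norm T_F^2/p$, are inherited directly from Theorems~\ref{sigma} and~\ref{rec}, so no new concentration analysis is required.
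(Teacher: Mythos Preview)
Your proposal is correct and follows essentially the same route as the paper: lower bound the projected row norm uniformly on $S'$ via equations~\eqref{frob} and~\eqref{normS} from Theorem~\ref{rec}, bound the average of $\norm{\widehat T_i}^2$ over $S'$ by $\norm{\widehat T}_F^2/|S'|$, and apply Markov's inequality to control the ratio on a $(1-\xi)$ fraction of $S'$. The only cosmetic differences are that the paper applies Markov directly to the ratio $W_i=\norm{\widehat T_i}^2/\norm{(\widehat T_{\geq\sigma,\kappa})_i}^2$ (rather than to the numerator alone) and uses the $9\epsilon$ bound for $\widehat T_{\geq\sigma,\kappa}$ in the denominator instead of your $3\epsilon$ bound for $\widehat T_{\geq\sigma}$, which keeps the set $S'$ identical to the one appearing in the correctness analysis.
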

\begin{proof}

First, by the conditions of Theorem \ref{sigmakappa}, we must have $p \geq \frac{ 36\sqrt{2} (nk)^{1/2}} { \norm{A}_{F} \epsilon^{3}}$. That is the theorem works even for a $p$ which is sub-constant. However, in order to have the desired running time, we need to take $p$ to be some constant, meaning that we need to subsample a constant fraction of the matrix elements. This is also the case for classical recommendation systems \cite{AFKMS01, DKR02}. 

Second, we need to show that for most users the term $W_i \equiv \frac{||\widehat{T}_i||^2}{ ||(\widehat{T}_{\geq \sigma,\kappa})_i||^2}$ that appears in the running time of the quantum projection algorithm is a constant. This is to be expected, since most typical rows of the matrix project very well onto the space spanned by the top singular vectors, since the spectrum of the matrix is well concentrated on the space of the top singular vectors.
%

As in Theorem \ref{rec}, we focus on the users in the subset $S'$, with $|S'| \geq (1-\delta - \zeta)m$, for which equations \ref{assumption} and \ref{normS} hold. For these users we can use equation \ref{normS}
with the matrix $\widetilde{T} = \widehat{T}_{\geq \sigma,\kappa}$ and error $9\epsilon$. We have

\begin{eqnarray*}
E_{i \in S'}[W_i] & = & E_{i \in S'}[\frac{||\widehat{T}_i||^2} { ||(\widehat{T}_{\geq \sigma,\kappa})_i||^2}] 
\leq
\frac{E_{i \in S'}[||\widehat{T}_i||^2]}{\frac{\norm{\widehat{T}}_{F}^{2}}{(1+\epsilon)^2m} \en{ \frac{1}{\sqrt{1+\gamma}} - \frac{9\epsilon}{\sqrt{\delta}} }^{2}} 
\leq \frac{\frac{||\widehat{T}||_F^2}{(1-\delta-\zeta)m}}{\frac{\norm{\widehat{T}}_{F}^{2}}{(1+\epsilon)^2m} \en{ \frac{1}{\sqrt{1+\gamma}} -\frac{9\epsilon}{\sqrt{\delta}} }^{2}} \\
& \leq & \frac{(1+\epsilon)^2}{ (1-\delta-\zeta)\en{ \frac{1}{\sqrt{1+\gamma}} - \frac{9\epsilon}{\sqrt{\delta}} }^{2}}.
\end{eqnarray*}
By Markov's inequality, for at least $(1-\xi)|S'|$ users in $S'$ we have  $W_i \leq \frac{(1+\epsilon)^2}{\xi (1-\delta-\zeta)\en{\frac{1}{\sqrt{1+\gamma}} - \frac{9\epsilon}{\sqrt{\delta}} }^{2}}$, which for appropriate parameters is a constant. 
Hence, for at least $(1-\xi)(1-\delta - \zeta)m$ users, the quantum recommendation algorithm has an expected running time of $O(\text{poly}(k)\text{polylog}(mn))$ and produces good recommendations with high probability. As we said we can easily turn this into a worst-case running time, by stopping after running $\log (n)$ times more than the expected running time and hence decreasing the correctness only by a factor of $1-1/\text{poly}(n)$.
\end{proof}

\subsection*{Acknowledgements:} IK was partially supported by projects ANR RDAM, ERC QCC and EU QAlgo. 
AP was supported by the Singapore National Research Foundation under NRF RF Award No. NRF-NRFF2013-13.

\bibliographystyle{IEEEtranS} 
\bibliography{bibliography.bib}

\appendix

\section{The data structure} 
We prove the following theorem.
\begin{theorem}(Theorem \ref{datastr} restated)
Let $A\in \R^{m\times n}$ be a matrix. Entries $(i, j, A_{ij})$ arrive in the system in some arbitrary order, and $w$ denotes the number of entries that have already arrived in the system. There exists a data structure to store the matrix $A$ with the following properties:
\begin{enumerate}[i.]
\item The size of the data structure is $O(w \log^{2} (mn))$.
\item The time to store a new entry $(i,j,A_{ij})$ is $O(\log^2(mn))$.
\item A quantum algorithm that has quantum access to the data structure can 
perform the mapping $\widetilde{U}:\ket{i}\ket{0} \to \ket{i} \ket{A_i}$, for $i \in [m]$,
corresponding to the rows of the matrix currently stored in memory and the mapping $\widetilde{V}:\ket{0}\ket{j} \to \ket{\widetilde{A}} \ket{j}$, for $j \in [n]$,
where $\widetilde{A} \in \R^{m}$ has entries $\widetilde{A}_{i} = \norm{A_{i}} $ in time $\emph{polylog}(mn)$. 
\end{enumerate}
\end{theorem}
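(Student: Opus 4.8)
The plan is to realize the data structure as a collection of binary trees --- one per nonempty row, together with one extra tree over the rows --- and to prepare the required states by a sequence of controlled rotations that read the internal nodes of these trees. For each row index $i$ that has received at least one entry, I would maintain a binary tree $B_i$ with $n$ leaves, where leaf $j$ stores the signed value $A_{ij}$ and each internal node stores the sum of the squares of the $A_{ij}$ over the leaves of its subtree; in particular the root of $B_i$ stores $\norm{A_i}^2$. I would maintain one additional tree $B$ with $m$ leaves whose leaf $i$ stores $\norm{A_i}^2$ (the value at the root of $B_i$) and whose internal nodes again hold subtree sums of squares, so that the root of $B$ stores $\norm{A}_F^2$. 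Crucially, to respect sparsity I would not allocate the full arrays of $2n$ (resp.\ $2m$) cells, but store only the nodes lying on a root-to-leaf path of some present entry, each keyed by its address, i.e.\ the bit string of length $O(\log(mn))$ encoding the path from the root; these keyed nodes are kept in an ordered structure (e.g.\ a balanced search tree) allowing logarithmic-time lookup and insertion.

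Next I would verify properties (i) and (ii). When a new entry $(i,j,A_{ij})$ arrives, I follow the root-to-leaf path in $B_i$, creating any missing node and adding $A_{ij}^2$ to the value stored at each of its $O(\log n)$ ancestors; the resulting change in $\norm{A_i}^2$ at the root of $B_i$ is then propagated up the $O(\log m)$ ancestors of leaf $i$ in $B$. Each of these $O(\log(mn))$ node updates requires one lookup or insertion in the ordered structure, costing $O(\log(mn))$, so the update time is $O(\log^2(mn))$. For the size, each arriving entry creates at most $O(\log(mn))$ new nodes, and each node stores an address and value of $O(\log(mn))$ bits, so after $w$ entries the total size is $O(w\log^2(mn))$.

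The heart of the argument, and the main obstacle, is property (iii): implementing the unitaries in $\text{polylog}(mn)$ time. To build $\widetilde{U}:\ket{i}\ket{0}\to\ket{i}\ket{A_i}$, I would perform a standard amplitude-encoding descent through $B_i$, controlled on $\ket{i}$. Starting at the root and descending level by level, at a node $v$ with stored value $s_v$ and children $v_L,v_R$ storing $s_{v_L},s_{v_R}$, apply a single-qubit rotation taking $\ket{0}$ to $\sqrt{s_{v_L}/s_v}\,\ket{0}+\sqrt{s_{v_R}/s_v}\,\ket{1}$, controlled on the register holding the current prefix. After $\lceil\log n\rceil$ levels the amplitude on basis state $\ket{j}$ equals $|A_{ij}|/\norm{A_i}$, and a final controlled sign correction read off the signed leaf value yields amplitude $A_{ij}/\norm{A_i}$, i.e.\ exactly the state $\ket{A_i}$ of Definition~\ref{vstate}. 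Each rotation angle is computed from a constant number of quantum queries to node values, so the circuit uses $O(\log n)$ queries and elementary gates and runs in time $\text{polylog}(mn)$ under quantum access to the data structure. The map $\widetilde{V}$ is obtained identically using the tree $B$ over rows (with no control register, since $j$ is a spectator), and since all its stored quantities are nonnegative no sign correction is needed.

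The subtlety I would check most carefully is that these rotations act coherently in superposition over $i$ for $\widetilde{U}$: this holds because the node values are retrieved from an address-indexed structure queryable in superposition, and the rotation at each level is controlled on the already-prepared prefix register, so the recursion is well defined on each branch simultaneously. One must also confirm that division by a vanishing $s_v$ never affects the output, which is immediate since any node with $s_v=0$ is reached with zero amplitude and can be left untouched.
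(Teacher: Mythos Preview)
Your proposal is correct and follows essentially the same approach as the paper: per-row binary trees storing squared amplitudes (and signs) at the leaves with subtree sums at internal nodes, an additional tree over the roots to handle $\widetilde{V}$, sparse storage via ordered lists keyed by node address, and Grover--Rudolph style controlled rotations descending the trees to prepare the vector states. Your added remarks on coherence in superposition over $i$ and the harmlessness of zero-value internal nodes are sound and make explicit points the paper leaves implicit.
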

\begin{proof} 
The data structure consists of an array of $m$ binary trees $B_{i}, i \in [m]$. The trees $B_{i}$ are initially empty. When a new entry $(i,j,A_{ij})$ arrives the leaf node $j$ in tree $B_{i}$ 
is created if not present and updated otherwise. The leaf stores the value $A_{ij}^2$ as well as the sign of $A_{ij}$. 
The depth of each tree $B_{i}$ is at most $\lceil \log n \rceil$ as there can be at most $n$ leaves.  An internal node $v$ of $B_{i}$ stores the sum of the values of all leaves in the subtree rooted at $v$, i.e. the sum of the square amplitudes of the entries of $A_{i}$ in the subtree. Hence, the value stored at the root is $\norm{A_i}^2$. 
When a new entry arrives, all the nodes on the path from that leaf to the tree root are also updated. The different levels of the tree $B_{i}$ are stored as ordered lists 
so that the address of the nodes being updated can be retrieved in time $O(\log mn)$. 
The binary tree for a 4-dimensional unit vector for which all entries have arrived is illustrated in figure \ref{figtree}. 

The time required to store entry $(i,j,A_{ij})$ is $O(\log^{2} mn)$ as the insertion algorithm makes 
at most $\lceil \log n \rceil$ updates to the data structure and each update requires time 
$O(\log mn)$ to retrieve the address of the updated node.

\begin{figure} 

\begin{tikzpicture}[scale=1, level 1/.style={sibling distance=8em},
level 2/.style={sibling distance=5em}, level distance=2.5cm, information text/.style={rounded corners,
fill=green!5,inner sep=1ex}]] 

\node [rectangle, draw=black, minimum size=4mm]{1.0} 
child {node [rectangle, draw=black, minimum size=4mm] {0.32}  
	child { node [rectangle, draw=black]  {0.16} } 
	child { node [rectangle, draw=black]{0.16} }
	}
child {node [rectangle, draw=black, minimum size=4mm] {0.68} 
	child { node [rectangle, draw=black]{0.64} } 
	child { node [rectangle, draw=black]{0.04} } 	
        };
        
 \draw[xshift=4.2cm, yshift=-2cm] node[right,text width=8.5cm,information text] {
Let $\ket{\phi} = 0.4 \ket{00} + 0.4\ket{01} + 0.8\ket{10} + 0.2{\ket{11}}$.
\begin{itemize} 
\item Rotation on qubit 1: $ \ket{0}\ket{0} \to ( \sqrt{0.32} \ket{0} + \sqrt{0.68} \ket{1} ) \ket{0}$
\item Rotation on qubit 2 conditioned on qubit 1: 
\al{ 
 ( \sqrt{0.32} \ket{0} + \sqrt{0.68} \ket{1} ) \ket{0} \to \nl 
 \sqrt{0.32} \ket{0}  \frac{1}{\sqrt{0.32}}(0.4 \ket{0} +0.4 \ket{1} ) + \nl 
\sqrt{0.68} \ket{1} \frac{1}{\sqrt{0.68}} ( 0.8 \ket{0} + 0.2 \ket{1}) \notag 
}
\end{itemize}    
};
\end{tikzpicture}
\caption{Vector state preparation illustrated for $4$-dimensional state $\ket{\phi}$.} \label{figtree} 
\end{figure}

The memory requirement for the data structure is $O(w \log^2 mn)$ as for each entry $(i,j,A_{ij})$ at most 
 $\lceil \log n \rceil$ new nodes are added, each node requiring $O(\log mn)$ bits. 
 
 We now show how to perform $\widetilde{U}$ in time $\text{polylog}(mn)$ if an algorithm has quantum access to this classical data structure. The state preparation procedure using pre-computed amplitudes is well known in the literature, for instance see \cite{GR02}. 
The method is illustrated for a $4$-dimensional state $\ket{\phi}$ corresponding to a unit vector in figure \ref{figtree}. 
The amplitudes stored in the internal nodes of $B_{i}$ are used to apply a sequence of conditional rotations to the initial state $\ket{0}^{\lceil \log n \rceil}$ to obtain $\ket{A_{i}}$. Overall, there are $\lceil \log n \rceil$ rotations applied and for each one of them we need two quantum queries to the data structure (from each node in the superposition we query its two children).

The amplitude stored at an internal node of $B_{i}$ at depth $t$ corresponding to $k \in \{0, 1\}^{t}$ is,  
\[ 
B_{i,k} := \sum_{j \in [n], j_{1:t}=k}  A_{ij}^{2}
\]  
where $j_{1:t}$ denotes the first $t$ bits in the binary representation for $j$. Note that $B_{i,k}$ is the probability of observing outcome $k$ if the first $t$ bits of $\ket{A_{i}}$ are measured in the standard basis. 
Conditioned on the first register being $\ket{i}$ and the first $t$ qubits being in state $\ket{k}$ the rotation is applied to the $(t+1)$ qubit as follows
\[
\ket{i}\ket{k}\ket{0} \to \ket{i}\ket{k} \frac{1}{\sqrt{B_{i,k}}} \left( \sqrt{B_{i,k0}} \ket{0} + \sqrt{B_{i,k1}} \ket{1} \right).
\] 
The sign is included for rotations applied to the $\lceil \log n \rceil$-th qubit 
\[
\ket{i}\ket{k}\ket{0} \to \ket{i}\ket{k} \frac{1}{\sqrt{B_{i,k}}} \left( sgn(A_{k0})\sqrt{B_{i,k0}} \ket{0} + sgn(A_{k1})\sqrt{B_{i,k1}} \ket{1} \right). 
\]

Last, we show how to perform $\widetilde{V}$ in time $\text{polylog}(mn)$. Note that the amplitudes of the vector $\widetilde{A}$ are equal to $\norm{A_{i}}$, and the values stored on the roots of the trees $B_i$ are equal to $\norm{A_i}^2$. Hence, by a similar construction (another binary tree) for the $m$ roots, we can perform the unitary $\widetilde{V}$ efficiently.  

\end{proof} 

\end{document}